%
\documentclass[runningheads]{llncs}
\usepackage{graphicx}
%

\usepackage{amsmath,amsfonts,amssymb,color}
\usepackage{fancybox}

\usepackage{tikz}

\usepackage[ruled,vlined]{algorithm2e}
\usepackage{algorithmic}
\usepackage{float}

\def\defeq{\stackrel{\mathrm{def}}{=}}

\newenvironment{fminipage}%
{\begin{Sbox}\begin{minipage}}%
		{\end{minipage}\end{Sbox}\fbox{\TheSbox}}

\sloppy

\begin{document}
\title{Optimal Offline Dynamic $2,3$-Edge/Vertex Connectivity}
%
%
\author{Richard Peng\inst{1} \and
Bryce Sandlund\inst{2} \and
Daniel D. Sleator\inst{3}}
\authorrunning{Peng et al.}
%
\institute{School of Computer Science, Georgia Institute of Technology, Atlanta, GA, USA
\email{rpeng@cc.gatech.edu} \and
Cheriton School of Computer Science, University of Waterloo, Waterloo, ON, CA
\email{bcsandlund@uwaterloo.ca} \and
Dept. of Computer Science, Carnegie Mellon University, Pittsburgh, PA, USA
\email{sleator@cs.cmu.edu}}
\maketitle              
%

\begin{abstract}
We give offline algorithms for processing a sequence of $2$ and $3$ edge and vertex connectivity queries in a fully-dynamic undirected graph. While the current best fully-dynamic online data structures for $3$-edge and $3$-vertex connectivity require $O(n^{2/3})$ and $O(n)$ time per update, respectively, our per-operation cost is only $O(\log n)$, optimal due to the dynamic connectivity lower bound of Patrascu and Demaine. Our approach utilizes a divide and conquer scheme that transforms a graph into smaller equivalents that preserve connectivity information. This construction of equivalents is closely-related to the development of vertex sparsifiers, and shares important connections to several upcoming results in dynamic graph data structures, outside of just the offline model.
\end{abstract}


\section{Introduction}
\label{sec:intro}

Dynamic graph data structures seek to answer queries on a graph as it undergoes edge insertions and deletions. Perhaps the simplest and most fundamental query to consider is connectivity.
A connectivity query asks for the existence of a path
connecting two vertices $u$ and $v$ in the current graph.
As the insertion or deletion of a single edge may have large consequences to connectivity across the entire graph, constructing an efficient dynamic data structure to answer connectivity queries has been a challenge to the data structure community. A number of solutions have been developed, achieving a wide variety of runtime tradeoffs in a number of different
models~\cite{Eppstein94,Eppstein97,Eppstein06,KapronKM13,HolmLT98,HRT18,Huang17,KaczmarzL15,LackiS13,Thorup00,Wulff-Nilsen17}

The models typically addressed are \emph{online}: each query must be answered before the next query or update is given. A less demanding variant is the \emph{offline} setting, where the entire sequence of updates and queries is provided as input to the algorithm.
While an online data structure is more general, there are many scenarios in which the entire sequence of operations is known in advance. This is often the case when a data structure is used in a subroutine of an algorithm~\cite{LiPYZ18:arxiv,Bringmann18}, one specific example being the use of dynamic trees in the near-linear time minimum cut algorithm of Karger~\cite{Karger08}.

In exchange for the loss of flexibility, one can hope to obtain faster and simpler algorithms in the offline setting. This has been shown to be the case in the dynamic minimum spanning tree problem. While an online fully-dynamic minimum spanning tree data structure requires about $O(\log^4 n)$ time per update~\cite{Holm15}, the offline algorithm of Eppstein requires only $O(\log n)$ time per update~\cite{Eppstein94}.

In this paper, we show similar, but stronger, performance gains for higher versions of connectivity. In particular, we consider the problems of $2,3$-edge/vertex connectivity on a fully-dynamic undirected graph. An extension of connectivity, $c$-edge connectivity asks for the existence of $c$ edge-disjoint paths between two vertices $u$ and $v$ in the current graph. Vertex connectivity requires vertex-disjoint paths instead of edge-disjoint paths. Current online fully-dynamic $2$-edge/vertex connectivity data structures require update time at least $O(\log^2 n)$~\cite{HRT18} and $O(\log^3 n)$\footnote{This complexity is claimed in Thorup's STOC 2000~\cite{Thorup00} result. As noted by Huang et al.~\cite{Huang17}, the paper provides few details, deferring to a journal version that has since not appeared. The best complexity for online fully-dynamic biconnectivity prior to this claim was $O(\log^5 n)$ by Holm and Thorup~\cite{HolmLT98}.}~\cite{Thorup00}, respectively, and current online fully-dynamic $3$-edge/vertex connectivity data structures require update time at least $O(n^{2/3})$ and $O(n)$, respectively~\cite{Eppstein97}. In contrast, our offline algorithms for $2,3$-edge/vertex connectivity require only $O(\log n)$ time per operation. As the lower bound on dynamic connectivity~\cite{DemaineP04}, as well as most lower bounds in general~\cite{AbboudWY15,Abboud14,Abboud16,Dahlgaard16}, also apply in the offline model, our algorithms are optimal up to a constant factor. This paper further shows that any lower bound attempting to show hardness stronger than $\Omega(\log n)$ time per operation for online fully-dynamic $2,3$-edge/vertex connectivity must make use of the online model.

As a straightforward application of our results, one can consider the use of our algorithms when data regarding a dynamic network is collected, but not analyzed, until a later point in time. For example, to diagnose an issue of network latency across key routing hubs, or determine viability of a dynamically-changing network of roads, our algorithms can answer a batch of queries in time $O(t \log n)$, where $t$ is the total number of updates and queries. Since online fully-dynamic algorithms for higher versions of connectivity are significantly slower, namely, $O(n^{2/3})$ and $O(n)$ time update for $3$-edge and $3$-vertex connectivity, respectively, our offline data structure makes these computations practical for large data sets when they would otherwise be prohibitively expensive.

Related to our work are papers by Lacki and Sankowski~\cite{LackiS13} and Karcmarz and Lacki~\cite{KaczmarzL15}, which also apply to the above applications but for lower versions of connectivity. Their work considers a fixed sequence of graph updates, given in advance, and is then able to answer connectivity queries regarding intervals of this sequence, online. This is more general than the model we consider because the queries need not be supplied in advance and data regarding an interval of time is richer than information from a specific point in time. For connectivity and $2$-edge connectivity, Karcmarz and Lacki achieve $O(\log n)$ time per operation~\cite{KaczmarzL15}. Both $2$-vertex connectivity and $3$-edge/vertex connectivity queries are not supported.

The techniques developed in this paper may be of independent interest. Our work has close connections with recent developments in vertex sparsification, particularly vertex sparsification-based dynamic graph data structures~\cite{AbrahamDKKP16,DurfeeKPRS17,GoranciHP17,GoranciHP18:arxiv,Durfee19,Eppstein06,Goranci18,Kratsch12,Assadi2015DynamicSF,Fafianie16,Fafianie16b}. In particular, the equivalent graphs at the core of our algorithms are akin to vertex sparsifiers, with the main difference that $2$- and $3$-connectivity require preserving far less information than the more general definitions of vertex sparsifiers~\cite{GoranciHP17,Kratsch12}. A promising step in this direction is very recent work of Goranci et al.~\cite{Goranci18}, which suggests the notion of a \emph{local sparsifier}. This is a generalization of the sparsifier that we consider here, and leads to efficient incremental algorithms in the \emph{online} setting.

Indeed, offline algorithms haven proven useful for the development of online counterparts in the past. Once such example is recent development in the maintenance of dynamic effective resistance. Recent work in fully-dynamic data structures for maintaining effective resistances online~\cite{Durfee19} relied heavily upon ideas from earlier data structures for maintaining effective resistances in offline~\cite{DurfeeKPRS17,LiZ18} or offline-online hybrid~\cite{DurfeeKPRS17,LiPYZ18:arxiv} settings.

The results of this paper were previously published online in the open access journal arXiv~\cite{PengSS} and have recently been extended to offline $4$- and $5$-edge connectivity~\cite{Molina19}. This new work achieves about $O(\sqrt{n})$ time complexity per operation.

The rest of this paper will be dedicated to proving the following theorem:

\begin{theorem}
\label{thm:main}
Given a sequence of $t$ updates/queries on a graph of the form:
\begin{itemize}
	\item Insert edge $(u, v)$,
	\item Delete edge $(u, v)$,
	\item Query if a pair of vertices $u$ and $v$ are
$2$-edge connected/$3$-edge connected/bi-connected/tri-connected in the current graph,
\end{itemize}
there exists an algorithm that answers all queries in $O(t\log{n})$ time.
\end{theorem}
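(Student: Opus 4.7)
The plan is to adapt the classical offline divide-and-conquer framework of Eppstein. I would build a segment tree over the time axis $[1,t]$; each edge inserted at time $i$ and deleted at time $j$ has its lifetime decomposed into $O(\log t)$ canonical segment-tree intervals, so on any such interval the edge is "permanently present" and need not be tracked by subsequent recursive calls. At a segment-tree node covering interval $I$, the only vertices that matter for the updates/queries lying strictly inside $I$ are those that are endpoints of such an update/query, or endpoints of an edge whose lifetime crosses the boundary of $I$; call these the \emph{active} vertices at that node. A standard charging shows that the total number of active vertices, summed over all nodes at a single level of the segment tree, is $O(t)$.

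The recursion is driven by a \emph{vertex-sparsifier} black box for each of the four connectivity notions: for a graph $G$ with terminal set $T$, produce a graph $H$ on $O(|T|)$ vertices that agrees with $G$ on the relevant (2-edge, 3-edge, 2-vertex, 3-vertex) connectivity relation restricted to $T$, and compute $H$ in time $O(|V(G)| + |E(G)|)$. For $k=2$, such sparsifiers follow directly from block-cut tree / SPQR-tree decompositions (which already are linear-size structures describing exactly the biconnected and 2-edge-connected decompositions). For $k=3$, one must invoke an analogous decomposition for triconnected components and for 3-edge-connected components; I would separate each of the four cases into its own lemma so that Theorem~\ref{thm:main} is obtained by plugging each into a single framework.

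Given such sparsifiers, the algorithm at a node processes the permanent edges crossing that interval, invokes the sparsifier on the active-vertex set to produce a small equivalent graph, incorporates the edges whose lifetimes are entirely inside $I$ as further edges of that sparsifier, and recurses on the two halves of $I$, passing each half its own active set of size halved in expectation. The recurrence $T(m,\ell) \le 2T(m/2,\ell/2) + O(m+\ell)$, together with the per-level bound of $O(t)$ active vertices, gives the overall $O(t\log t)$ running time.

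The principal obstacle is the sparsifier step for the 3-connectivity variants: one needs a linear-size object that faithfully represents all separation pairs and 3-edge-cuts among terminals, and, crucially, one needs it to compose well with contractions when descending the segment tree so that we truly do linear work per level rather than linear work plus an extra log factor from recomputation. Verifying that SPQR-trees and their 3-edge analogues can be edited under the required sequence of edge insertions and contractions in amortized constant time per operation, and that terminal-restricted 2/3-connectivity queries reduce to simple tree queries on these structures, is where I expect the real technical effort to lie.
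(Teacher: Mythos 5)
Your high-level plan is the paper's framework: a binary divide-and-conquer on the time axis, a linear-time vertex sparsifier for the graph of permanent edges at each recursion node, geometric halving of active-vertex sets, and an $O(t\log t)$ total charge. (The paper formulates this as Lemma~\ref{lem:framework} with the procedure \textsc{AnswerQueries}; your segment-tree phrasing is equivalent.) Two points, however, need fixing before the plan is sound.

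First, the sparsifier you ask for is too weak. You require that $H$ merely ``agrees with $G$ on the relevant connectivity relation restricted to $T$,'' i.e.\ the same pairwise $2$/$3$-connectivity among terminals. This does not survive the subsequent insertions of edges among terminals. Concretely, let $G$ be a star on terminals $a,b,c,d$ with a non-terminal hub $x$; no two terminals are $2$-edge-connected, so the path $a$--$b$--$c$--$d$ is a valid sparsifier under your definition. After inserting the edge $\{a,d\}$ at a deeper recursion node, $a$ and $b$ become $2$-edge-connected in the path but not in the star. What is actually needed is to preserve the set of minimal separating cuts of small size with respect to the terminal set, including their sizes in the edge-connectivity cases; this is exactly the paper's notion of active-cut equivalence (Definition~\ref{def:CutEquivGraph} and Lemma~\ref{lem:sufficient}), and it is the key definition you are missing. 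The block-cut tree, bridge-forest, cactus, and SPQR tree each only give a linear-size decomposition of $G$; obtaining an $O(|T|)$-size active-cut equivalent requires an explicit pruning of these structures (Lemmas~\ref{lem:twoedgetreeprune}, \ref{lem:bitreeprune}, \ref{lem:ThreeEdge}, \ref{lem:hangingcomp2}--\ref{lem:tritreeprune}), which is where the bulk of the paper's technical content lies and which your sketch does not attempt.

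Second, your stated ``principal obstacle''---editing SPQR trees under edge insertions and contractions in amortized $O(1)$---is a red herring. The paper never maintains an SPQR tree dynamically; at each recursion node it rebuilds the decomposition from scratch on the current $O(k)$-size graph in linear time, and the geometric shrinkage of instance sizes makes the total cost $O(t\log t)$. Avoiding dynamic SPQR maintenance is precisely the advantage of the offline setting, so trying to build editable structures would take you toward a strictly harder problem than you need to solve. Also, a minor slip: the active set is halved deterministically by construction (each recursion node hands each child exactly half the events and only $O(1)$ crossing endpoints per edge), not ``in expectation.''
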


For simplicity, we will assume the graph is empty at the start and end of the sequence, but the results discussed are easily modified to start with an initial graph $G$, at the cost of an additive $O(m)$ term in the running time, where $m$ is the number of edges of $G$. Further, we assume a fixed vertex set of size $n$. Any update sequence with arbitrary vertex endpoints can be modified to one on a fixed set of vertices, where the size of the fixed set is equal to the largest number of non-isolated vertices in any graph achieved in the given update sequence.

The paper is organized as follows. We describe our offline framework for reducing graphs to smaller equivalents in Section~\ref{sec:framework}. We show how simple techniques can be used to create such equivalents for $2$-edge connectivity and bi-connectivity in Section~\ref{sec:Two}. In Section~\ref{sec:ThreeEdge} we extend these constructions to $3$-edge connectivity. Our most technical section is $3$-vertex connectivity, where constructing equivalent graphs requires careful manipulation of SPQR trees. Unfortunately, due to page limits, this is located in Appendix~\ref{sec:ThreeVertex}.


\section{Offline Framework}
\label{sec:framework}
The main idea of our offline framework is to perform divide and conquer on the input sequence, similar to what is done in Eppstein's offline minimum spanning tree algorithm~\cite{Eppstein94}. Consider the full sequence of updates and queries $x_1, \ldots, x_t$, where each $x_i$ is either an edge insertion, edge deletion, or query. Call each $x_i$ an event.

Assume each inserted edge has unique identity. Then for each inserted edge $e$, we may associate an interval $[I(e), D(e)]$, indicating that edge $e$ was inserted at time $I(e)$ and removed at time $D(e)$.
Plotting time along the $x$-axis and edges on the $y$-axis as in Figure \ref{fig:timelinefull} gives a convenient way to view the sequence of events.

\begin{figure}[h]
\begin{center}

\begin{tikzpicture}[x=1cm, y=1cm]
\usetikzlibrary{arrows,positioning} 

\tikzset{
    arrow/.style={
           |->,
           thick},
    empty/.style={white}
}

\node at (0,2.5) {$e_4$}; 
\node at (0,2) {$e_3$}; 
\node at (0,1.5) {$e_2$}; 
\node at (0,1) {$e_1$};
\draw (7,2.5) edge[arrow] (10,2.5); 
\draw (4,2) edge[arrow] (8,2); 
\draw (2,1.5) edge[arrow] (5,1.5); 
\draw (1,1) edge[arrow] (11,1); 
\draw (1,.5) edge[dotted] (1,2.7);
\node at (1, 0) {I($e_1$)};
\draw (2,.5) edge[dotted] (2,2.7);
\node at (2, 0) {I($e_2$)};
\draw (3,.5) edge[dotted] (3,2.7);
\node at (3, 0) {Q};
\draw (4, .5) edge[dotted] (4,2.7);
\node at (4, 0) {I($e_3$)};
\draw (5,.5) edge[dotted] (5,2.7);
\node at (5, 0) {D($e_2$)};
\draw (6,.5) edge[dotted] (6,2.7);
\node at (6, 0) {Q};
\draw (7, .5) edge[dotted] (7,2.7);
\node at (7, 0) {I($e_4)$};
\draw (8, .5) edge[dotted] (8,2.7);
\node at (8, 0) {D($e_3$)};
\draw (9, .5) edge[dotted] (9,2.7);
\node at (9, 0) {Q};
\draw (10, .5) edge[dotted] (10,2.7);
\node at (10, 0) {D($e_4$)};
\draw (11, .5) edge[dotted] (11,2.7);
\node at (11, 0) {D($e_1$)};
\node at (1, 3) {1};
\node at (2, 3) {2};
\node at (3, 3) {3};
\node at (4, 3) {4};
\node at (5, 3) {5};
\node at (6, 3) {6};
\node at (7, 3) {7};
\node at (8, 3) {8};
\node at (9, 3) {9};
\node at (10, 3) {10};
\node at (11, 3) {11};

\end{tikzpicture}
\end{center}

\caption{A timeline diagram of four edge
insertions(I)/deletions(D) and three queries(Q), with time on the $x$-axis and edges on the $y$-axis.}
\label{fig:timelinefull}
\end{figure}
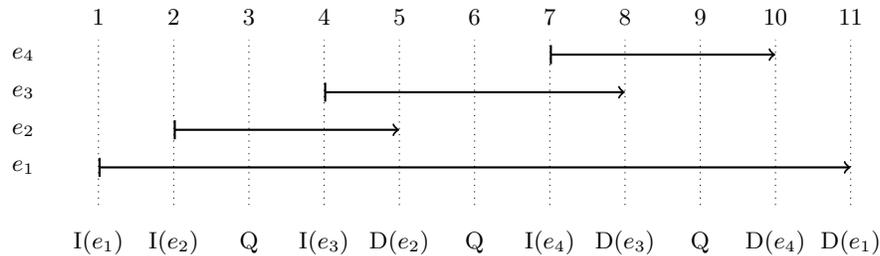

Fix some subinterval $[l,r]$ of the sequence of events. Let us classify all edges present at any point of time in the sequence $x_l, \ldots, x_r$ as one of two types.
\begin{enumerate}
\item Edges present throughout the duration ($i_e \leq l \leq r \leq d_e$),
we call \emph{permanent} edges.
\item Edges affected by an event in this range (one or both of $I(e), D(e)$ is in $(l, r)$),
we call \emph{non-permanent} edges.
\end{enumerate}

While there may be a large number of permanent edges,
the number of non-permanent edges is limited by the number
of time steps, $r - l + 1$.
Therefore, the graph can be viewed as a large static
graph on which a smaller number of events take place.

Our goal will be to reduce this graph of permanent edges to one
whose size is a small function of the number of events in the subinterval.
If we may do so without affecting the answers to the queries,
we can recursively apply the technique to achieve an efficient divide and conquer
algorithm for the original dynamic $c$-connectivity sequence.

We will work in the dual-view, considering cuts instead of edge-disjoint or vertex-disjoint paths. Two vertices $u$ and $v$ are $c$-edge connected if there does not exist a cut of size $c-1$ edges separating them; further, $u$ and $v$ are $c$-vertex connected if there does not exist a cut of $c-1$ vertices that separates them.

We need the following definition.

\begin{definition}
\label{equivalence}
Given a graph $G=(V_G,E_G)$ with vertex subset $W \subseteq V_G$ and a graph $H=(V_H,E_H)$ with $W \subseteq V_H$, we say that $H$ and $G$ are $c$-edge equivalent if, for any partition $(A,B)$ of $W$, the size of a minimum cut separating $A$ and $B$ is the same in $G$ and $H$ whenever either of these sizes is less than $c$. Similarly, we say $H$ and $G$ are $c$-vertex equivalent if, for any partition $(A,B,C)$ of $W$ with $|C| < c$, the size of a minimum vertex cut $D$ separating $A$ and $B$ such that $C \subseteq D$ and $D \cap A = \emptyset$, $D \cap B = \emptyset$, is the same in $G$ and $H$ whenever either of these sizes is less than $c$.
\end{definition}

This gives the following.

\begin{lemma}
\label{lem:AddEdges}
Suppose $G=(V_G,E_G)$ and $H=(V_H,E_H)$ are $c$-edge/$c$-vertex equivalent on vertex set $W$. Let $E_W$ denote any set of edges between vertices of $W$. Then $H'=(V_H,E_H \cup E_W)$ and $G'=(V_G,E_G \cup E_W)$ are $c$-edge/$c$-vertex equivalent.
\end{lemma}
\begin{proof}
We first show $c$-edge equivalence.
Let $(A, B)$ be any partition of $W$ and consider the minimum cuts separating $A$ and $B$ in $G'$ and $H'$.
Since the edges in $E_W$ are between vertices of $W$, they must cross the separation $(A, B)$ in the same way.
Therefore, if the minimum cut separating $A$ and $B$ had size less than $c$ in either $G$ or $H$, the minimum cuts
separating $A$ and $B$ will have equivalent size in $G'$ and $H'$. Further, if the minimum cuts separating $A$ and $B$ had
size larger or equal to $c$ in both $G$ and $H$, the minimum cuts separating $A$ and $B$ will also have size larger or equal to $c$ in $G'$ and $H'$, since we only add edges to $G'$ and $H'$. Thus $G'$ and $H'$ are $c$-edge equivalent.

We now consider $c$-vertex equivalence. Consider a partition $(A, B, C)$ of $W$. As with edge connectivity, if no vertex subset $D$ exists satisfying the conditions of Definition~\ref{equivalence}, the introduction of additional edges between any vertices of $W$ will not change the existence of such a set $D$ in $G'$ or $H'$. Furthermore, if an edge of $E_W$ connects a vertex of $A$ to a vertex of $B$, no vertex cut separates $A$ and $B$ in $G'$ and $H'$. Now suppose none of these cases is true, and there exists a vertex set $D$ satisfying the conditions of Definition~\ref{equivalence} such that the removal of $D$ disconnects $A$ and $B$ in $G$ and $H$ and further that no edge of $E_W$ connects a vertex of $A$ to a vertex of $B$. Then the removal of vertex set $D$ still disconnects $A$ and $B$ in $G'$ and $H'$. Thus $c$-vertex equivalence of $G'$ and $H'$ follows from $c$-vertex equivalence of $G$ and $H$.
\end{proof}

Now consider the graph $G$ of permanent edges for the subinterval $x_l, \ldots, x_r$ of events. Let $W$ be the set of vertices involved in any event in the subinterval. We will refer to these vertices as \emph{active} vertices, and all other vertices of $G$ not in $W$ as \emph{inactive} vertices. Lemma \ref{lem:AddEdges} says that if we reduce $G$ to a $c$-edge/ $c$-vertex equivalent graph $H$ on set $W$, the result of all queries in $x_l, \ldots, x_r$ on $H$ will be the same as on $G$. This is because all cuts in $H$ and $G$ that affect the queries (therefore of size less than $c$) are of equivalent size, even after the addition of non-permanent edges in $H$ and $G$.

This idea can lead to a divide and conquer algorithm if we can produce such equivalent graphs $H$ of small size efficiently. Specifically:

\begin{lemma}
\label{lem:framework}
Given a graph $G$ with $m$ edges and vertex set $W$ of size $k$, if there is an $O(m)$ time algorithm that produces a graph $H$ of size $O(k)$ that is $c$-edge/$c$-vertex equivalent to $G$ on $W$, then there is an algorithm that can answer all $c$-edge/$c$-vertex connectivity queries in a sequence of events $x_1, \ldots, x_t$ in $O(t \log n)$ time.
\end{lemma}
\begin{proof}
We perform divide and conquer on the sequence of events. We take the sequence of events $x_1, \ldots, x_t$ and divide it in half. Over each half, we will take the graph of permanent edges, which we denote $G$, and reduce it to a $c$-edge/ $c$-vertex equivalent graph $H$. We repeat the scheme recursively. As the subintervals get smaller, non-permanent edges become permanent and are absorbed by the production of equivalent graphs. Eventually, we reduce to subintervals with a constant number of events, which can be answered by any algorithm of our choice on a graph of constant size.

Consider the sizes of the graphs in each step of recursion. At the top level, the graph of permanent edges in each half has size $O(t)$, the number of events in the full sequence. Thus in the second level, the graph we start with is of size linear in the number of events in that interval, plus the number of edges that became permanent in that interval, which in turn are bounded by the number of non-permanent edges in the interval at the preceding level. This continues recursively, and the result is that all graphs operated on are of size linear in the number of events processed at that level. The runtime recurrence is $T(t) = T(t/2) + O(t)$, which clearly solves to $O(t \log t)$. If $t$ is polynomially-bounded by $n$, this is $O(t \log n)$. If not, we may first break the sequence of events $x_1, \ldots x_t$ into blocks of size, say, $n^2$. Since the size of the graph cannot be more than $O(n^2)$ in any block, the runtime argument carries through. This proves the lemma.
\end{proof}

The remainder of the paper will show the construction of $2$-edge, $2$-vertex, $3$-edge, and $3$-vertex equivalent graphs.


\section{Equivalent Graphs for $2$-Edge Connectivity and Bi-connectivity}
\label{sec:Two}

We now show offline algorithms for dynamic $2$-edge connectivity and
bi-connectivity by constructing $2$-edge and $2$-vertex equivalents
needed by Lemma \ref{lem:framework}.
These two properties ask for the existence of a single edge/vertex whose
removal separates query vertices $u$ and $v$.
Since these cuts can affect at most one connected component,
it suffices to handle each component separately.

The underlying structure for $2$-edge connectivity and bi-connectivity
is tree-like.
This is perhaps more evident for $2$-edge connectivity, where vertices
on the same cycle belong to the same $2$-edge connected component.
We will first describe the reductions that we will make to this tree
in Section \ref{subsec:twoedge}, and adapt them to
bi-connectivity in Section \ref{subsec:twovert}.

At times we will make use of the term ``equivalent cut". By this we mean that
a cut $C'$ is equivalent to $C$ if it has the same size and separates the vertices 
of $W$ in the same way.

\subsection{$2$-Edge Connectivity}
\label{subsec:twoedge}

Using depth-first search~\cite{HopcroftT73}, we can identify all
cut-edges in the graph and the $2$-edge connected components
that they partition the graph into.
The case of edge cuts is slightly simpler conceptually,
since we can combine vertices without introducing new cuts.
Specifically, we show that each $2$-edge connected
component can be shrunk to single vertex.

\begin{lemma}
\label{lem:twocomp}
Let $S$ be a $2$-edge connected component in $G$.
Then contracting all vertices in $S$ to a single vertex $s$ in $H$\footnote{Here we abuse our requirement $W \subseteq V_H$, where $V_H$ are the vertices of $H$, slightly. A map of $W$ onto $V_H$ that preserves the cuts needed by $c$-edge/$c$-vertex equivalence is all that is required, but $c$-edge/$c$-vertex equivalence is easier to read without introducing the map notation.},
and endpoints of edges correspondingly, creates a
$2$-edge equivalent graph.
\end{lemma}

\begin{proof}
The only cuts that we need to consider are ones that
remove cut edges in $G$ or $H$.
Since we only contracted vertices in a component,
there is a one-to-one mapping of these edges from
$G$ to $H$.
Since $S$ is $2$-edge connected, all vertices in it
will be on the same side of one of these cuts.
Furthermore, removing the same edge in $H$
leads to a cut with $s$ instead.
Therefore, all active vertices in $S$ are mapped
to $s$, and are therefore on the same side of the cut.
\end{proof}

This allows us to reduce $G$ to a tree $H$, but the size of this
tree can be much larger than $k$.
Therefore we need to prune the tree by removing
inactive leaves and length $2$ paths whose
middle vertex is inactive.

\begin{lemma}
\label{lem:twoedgetreeprune}
If $G$ is a tree, the following two operations lead
to $2$-edge equivalent graphs $H$.
\begin{itemize}
	\item Removing an inactive leaf.
	\item Removing an inactive vertex with degree $2$ and adding an edge between its two neighbors.
\end{itemize}
\end{lemma}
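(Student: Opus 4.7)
The plan is to verify, for both operations, the two conditions of Definition \ref{def:CutEquivGraph} with $l=2$. Since each operation deletes only an inactive vertex, the active vertex sets of $G$ and $H$ will coincide, and I would take $f$ to be the identity. My strategy is to set up an explicit edge-correspondence between $G$ and $H$ and argue that under this correspondence the minimal separating cuts of $V^{active}$, together with their sizes, match exactly.

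For the inactive leaf removal, let $v$ be the leaf and $e=(u,v)$ its incident edge, so $H$ is $G$ with $v$ and $e$ deleted. The key observation will be that cutting $e$ in $G$ isolates $\{v\}$, which contains no active vertex, so $e$ cannot belong to any cut that separates $V^{active}$. Consequently every minimal separating cut of $V^{active}$ in $G$ uses only edges of $G\setminus\{e\}$, which are exactly the edges of $H$. The identity bijection on this edge set preserves both sizes and active-vertex partitions, giving both conditions of Definition \ref{def:CutEquivGraph} immediately.

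For the inactive degree-$2$ contraction, let $w$ be the vertex with neighbors $u,v$, and set $e_1=(u,w)$, $e_2=(w,v)$; these two edges are replaced in $H$ by $e'=(u,v)$. I would map both $e_1$ and $e_2$ to $e'$ and leave all other edges fixed. The central claim will be that a minimal separating cut $(\mathcal{C},S,T)$ of $V^{active}$ in $G$ never contains both $e_1$ and $e_2$: if it did, $\{w\}$ would be an isolated component of the cut, and since $w$ is inactive, reassigning $w$ to either side of the partition would remove one of $e_1,e_2$ from $\mathcal{C}$ without changing the active partition, contradicting minimality. Hence any minimal cut in $G$ uses at most one of $e_1,e_2$, and replacing it (if present) by $e'$ yields a cut in $H$ with identical size and identical active partition.

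The main obstacle will be the reverse direction: ensuring that every minimal cut in $H$ lifts back to a cut in $G$ of the same size, and hence that cuts of size $\geq 2$ in $G$ do not accidentally admit size-$1$ equivalents in $H$. For the lifting, a cut in $H$ containing $e'$ pulls back to a cut in $G$ containing exactly one of $e_1,e_2$ (the choice being determined by which side we place the inactive vertex $w$), and the resulting cut has the same size and separates the same active partition. Any hypothetical smaller separating cut of $V^{active}$ in $H$ would then lift to a smaller separating cut in $G$ via the same recipe, contradicting minimality in $G$. The leaf case is analogous but simpler since no edge substitution is needed. Combining the two directions establishes active-cut equivalence for $l=2$ under both operations.
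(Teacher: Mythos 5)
Your proof is correct and takes essentially the same approach as the paper: the leaf case is the observation that the lost cut isolates only the inactive leaf, and the degree-2 case is the correspondence between cuts using one of $e_1,e_2$ in $G$ and cuts using $e'$ in $H$. You are somewhat more careful than the paper in explicitly ruling out a minimal cut containing both $e_1$ and $e_2$, a point the paper's phrasing (``if a cut removes either of the edges'') leaves implicit.
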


\begin{proof}
In the first case, the only cut in $G$ that no longer
exist in $H$ is the one that removes the cut edge
connecting the leaf with its unique neighbor.
However, this places all active vertices in one component
and thus does not separate $W$ and need not
be represented in $H$.

In the second case, if a cut removes either of the edges incident to
the degree $2$ vertex, removing the new edge creates an equivalent
cut since the middle vertex is inactive.
Also, for a cut that removes the new edge in $H$, removing either of
the two original edges in $G$ leads to an equivalent cut.
\end{proof}

This allows us to bound the size of the tree by the number of
active vertices, and therefore finish the construction.

\begin{lemma}
\label{lem:twoequivalent}
Given a graph $G$ with $m$ edges and $k$ active vertices $W$,
a $2$-edge equivalent of $G$ of size $O(k)$, $H$, can be
constructed in $O(m)$ time.
\end{lemma}

\begin{proof}
We can find all the cut edges and $2$-edge connected components
in $O(m)$ time using depth-first search \cite{HopcroftT73}, and
reduce the resulting structure to a tree $H$ using Lemma \ref{lem:twocomp}.
On $H$, we repeatedly apply Lemma \ref{lem:twoedgetreeprune}
to obtain $H'$.

In $H'$, all leaves are active, and any inactive internal vertex
has degree at least $3$.
Therefore the number of such vertices can be bounded by $O(k)$,
giving a total size of $O(k)$.
\end{proof}

\subsection{Bi-connectivity}
\label{subsec:twovert}

All cut-vertices (articulation points) can also be identified
using DFS, leading to a structure known as the block-tree.
However, several modifications are needed to adapt the
ideas from Section \ref{subsec:twoedge}.
The main difference is that we can no longer replace each
bi-connected component with a single vertex in $H$, since cutting
such vertices corresponds to cutting a much larger set in $G$.
Instead, we will need to replace the bi-connected components
with simpler bi-connected graphs such as cycles.

\begin{lemma}
\label{lem:bicomp}
Replacing a bi-connected component with a cycle containing
all its cut-vertices and active vertices gives a $2$-vertex equivalent graph.
\end{lemma}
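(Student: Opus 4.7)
The plan is to verify both conditions of Definition~\ref{def:CutEquivGraph} with $l = 2$, i.e., that $1$-vertex cuts separating active vertices are preserved in both directions between $G$ and $H$. Let $B$ denote the bi-connected component being replaced, and let $C$ be the cycle replacing it. The vertex set of $C$ consists of the cut-vertices of $G$ that lie in $B$ (call this set $K$) together with all active vertices in $B$; the map $f$ is the identity on active vertices, placing each active vertex of $B$ on its copy in $C$.

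My main structural observation is that both $B$ and $C$ are bi-connected, so $B - \{v\}$ is connected for any $v \in B$, and $C - \{v\}$ is a path (and in particular connected) for any $v \in V(C)$. This means the replacement does not change the block-tree outside of this component: no new articulation point is created inside $C$, and each articulation point in $K$ attaches to the same external blocks in $H$ as it did in $G$, since we did not touch the rest of $G$.

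For the forward direction, suppose $(\{v\}, S_G, T_G)$ is a minimal 1-vertex cut of $V_G^{active}$ in $G$. If $v \notin B$, then by bi-connectivity of $B$ the set $B$ lies entirely on one side of the cut; taking $C$ entirely on that same side in $H$ yields an equivalent cut. If $v \in B$, then $v$ must be an articulation point of $G$: otherwise $B - \{v\}$ would be connected and all external blocks would still attach through $K \setminus \{v\}$, so $G - \{v\}$ would remain connected. Hence $v \in K \subseteq V(C)$. Removing $v$ from $H$ leaves $C - \{v\}$ as one connected path, placing all remaining active vertices of $V(C)$ together in one component of $H - \{v\}$, and each external block attaches to $H - \{v\}$ the same way it attaches to $G - \{v\}$. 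Choosing $S_H, T_H$ to match this partition produces an equivalent cut under $f$. The reverse direction is symmetric: any $1$-vertex cut $\{v\}$ in $H$ separating active vertices forces $v$ to be an articulation point of $H$; since $C$ is bi-connected, $v$ either lies outside $C$ or lies in $K$, and in either case the same vertex serves as an articulation point in $G$ giving an equivalent cut.

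The main obstacle I expect is bookkeeping the partition of active vertices: one must verify, for each articulation point $v$ shared by $B$ and $C$, that the active vertices on each side of the cut in $G$ end up on the corresponding side in $H$ under $f$. This reduces cleanly to the connectivity of $B - \{v\}$ and $C - \{v\}$, which forces all active vertices of $B \setminus \{v\}$ to lie on the unique side of $H - \{v\}$ containing $C - \{v\}$, matching the side they occupy in $G - \{v\}$. Condition~2 of Definition~\ref{def:CutEquivGraph} then follows by contrapositive from the two directions above, since any $1$-vertex cut in one graph separating active vertices has been shown to produce a $1$-vertex cut in the other.
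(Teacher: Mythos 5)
Your proof is correct and takes essentially the same route as the paper's: both arguments rest on the observation that replacing a bi-connected component by another bi-connected graph (the cycle) attached at the same cut-vertices leaves the block-tree and hence the set of articulation points unchanged, so $1$-vertex cuts correspond exactly between $G$ and $H$. You simply spell out the case analysis ($v \notin B$ vs.\ $v \in K$) and the reverse direction more explicitly than the paper, which defers to the analogous Lemma~\ref{lem:twocomp}.
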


\begin{proof}
As this mapping maintains the bi-connectivity of the component,
it does not introduce any new cut-vertices.
Therefore, $G$ and $H$ have the same set of cut vertices
and the same block-tree structure.
Note that the actual order the active vertices appear in
does not matter, since they will never be separated.
The claim follows similarly to Lemma \ref{lem:twocomp}.
\end{proof}

The block-tree also needs to be shrunk in a similar manner.
Note that the fact that blocks are connected by shared vertices
along with Lemma \ref{lem:bicomp} implies the removal of
inactive leaves.
Any leaf component with no active vertices aside from
its cut vertex can be reduced to the cut vertex, and
therefore be removed.
The following is an equivalent of the degree two removal
part of Lemma \ref{lem:twoedgetreeprune}.

\begin{lemma}
\label{lem:bitreeprune}
Two bi-connected components $C_1$ and $C_2$ with no active
vertices that share cut vertex $w$ and are only incident to one other cut
vertex each, $u$ and $v$ respectively, can be replaced by an
edge connecting $u$ and $v$ to create a $2$-vertex
equivalent graph.
\end{lemma}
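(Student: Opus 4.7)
The plan is to verify the two conditions of Definition \ref{def:CutEquivGraph} with $l = 2$, using the identity map on active vertices. Under the hypothesis that $C_1$ and $C_2$ have no active vertices, every vertex of $C_1 \cup C_2$---including the three cut vertices $u$, $v$, $w$---is inactive, so $V_G^{active} = V_H^{active}$ and all active vertices lie outside $C_1 \cup C_2$. Let $B_u$ denote the set of vertices reachable from $u$ in $G$ without passing through $C_1$ (that is, $u$ together with the interiors of $u$'s other blocks), and define $B_v$ symmetrically; every active vertex lies in $B_u \cup B_v$, and $V_G^{active}$ is partitioned as $B_u^{active} \cup B_v^{active}$.

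Next I would enumerate the minimal size-$1$ separating cuts of $V^{active}$ that are affected by the replacement. In $G$ the only cut vertices inside $C_1 \cup C_2$ are $u$, $v$, and $w$, and because all three are inactive, each of the $u$-cut, $v$-cut, and $w$-cut induces the \emph{same} active-vertex split, namely $B_u^{active}$ on one side and $B_v^{active}$ on the other. In $H$ the modified region contains only the bridge edge $(u,v)$, whose two cut vertices are $u$ and $v$; their active-vertex splits are again $B_u^{active}$ versus $B_v^{active}$. Condition 1 of Definition \ref{def:CutEquivGraph} therefore holds: every size-$1$ separating cut of $V^{active}$ in $G$ outside the modified region is preserved verbatim in $H$, and each of the three $G$-cuts inside the region has the $u$-cut (equivalently, the $v$-cut) of $H$ as an equivalent. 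The reverse direction is symmetric: a size-$1$ cut at $u$ or $v$ in $H$ has the identically named cut in $G$ as its preimage. For condition 2, any two active vertices that are $2$-vertex connected in $G$ share a common block, necessarily disjoint from $C_1 \cup C_2$, and this block is untouched by the replacement, so they remain $2$-vertex connected in $H$.

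The main obstacle is handling the $w$-cut, which has no direct counterpart in $H$ since $w$ is deleted; the key observation is that $u$ and $v$ being inactive lets the $u$-cut in $H$ absorb the $w$-cut's active-split information. Definition \ref{def:CutEquivGraph} only requires the existence of an equivalent cut, not a bijection, so this many-to-one routing of $G$-cuts through a single $H$-cut is permitted. Minimality is automatic since every cut in question has size $1$. The only degenerate case is $u = v$, where the replacement edge would be a self-loop; then the $w$-cut's active split is trivial (all active vertices lie on a single side), no separating cut is created or destroyed, and the self-loop may simply be omitted.
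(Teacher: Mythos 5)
Your proof is correct and follows essentially the same approach as the paper's: observe that $u$, $v$, $w$ are all inactive, that the $u$-cut, $v$-cut, and $w$-cut in $G$ all induce the same active split ($B_u^{active}$ versus $B_v^{active}$), and route the $w$-cut of $G$ to the $u$-cut (or $v$-cut) of $H$. You spell out the cut correspondence more explicitly and flag the many-to-one routing, but the key moves are the same. One minor remark: the degenerate case $u=v$ that you treat is actually vacuous, since two distinct blocks of a graph can share at most one vertex, and $C_1$ and $C_2$ already share $w$.
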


\begin{proof}
As we have removed only $w$, any cut vertex in $H$ is
also a cut vertex in $G$.
As $C_1$ and $C_2$ contain no active vertices,
this cut would induce the same partition of active vertices.

For the cut given by removing $w$ in $G$, removing $u$
in $H$ gives the same cut since $C_1$ has no active vertices
(which in turn implies that $u$ is not active).
Note that the removal of $u$ may break the graph into more
pieces, but our definition of cuts allows us to place these
pieces on two sides of the cut arbitrarily.
\end{proof}

Note that Lemma \ref{lem:bicomp} may need to be applied
iteratively with Lemma \ref{lem:bitreeprune} since some of the
cut vertices may no longer be cut vertices due to the removal
of components attached to them.

\begin{lemma}
\label{lem:biequivalent}
Given a graph $G$ with $m$ edges and $k$ active vertices $W$,
a $2$-vertex equivalent of $G$ of size $O(k)$, $H$, can be
constructed in $O(m)$ time.
\end{lemma}

\begin{proof}
We can find all the initial block-trees using depth-first
search \cite{HopcroftT73}.
Then we can apply Lemmas \ref{lem:bicomp} and \ref{lem:bitreeprune}
repeatedly until no more reductions are possible.
Several additional observations are needed to run these
reduction steps in $O(m)$ time.
As each cut vertex is removed at most once, we can keep a counter
in each component about the number of cut vertices on it.
Also, the second time we run Lemma \ref{lem:bicomp} on a component,
it's already a cycle, so the reductions can be done without examining
the entire cycle by tracking it in a doubly linked list and
removing vertices from it.

It remains to bound the size of the final block-tree.
Each leaf in the block-tree has at least one active vertex
that's not its cut vertex.
Therefore, the block-tree contains at most $O(k)$
leaves and therefore at most $O(k)$ internal components
with $3$ or more cut vertices, as well as $O(k)$ components
containing active vertices.
If these components are connected by paths with $4$
or more blocks in the block tree,
then the two middle blocks on this path meet the condition of
Lemma \ref{lem:biequivalent} and should have been
removed by Lemma \ref{lem:bitreeprune}.
This gives a bound of $O(k)$ on the number of blocks, which in
turn implies an $O(k)$ bound on the number of cut vertices.
The edge count then follows from the fact that
Lemma \ref{lem:bicomp} replaces each component with a cycle,
whose number of edges is linear in the number of vertices.
\end{proof}


\section{$3$-Edge Connectivity}
\label{sec:ThreeEdge}

We now extend our algorithms to $3$-edge connectivity.
Our starting point is a statement similar to
Lemma~\ref{lem:twocomp}, namely that we can contract
all $3$-edge connected components.

\begin{lemma}
\label{lem:ThreeComp}
Let $S$ be a $3$-edge connected component in $G$.
Then contracting all vertices in $S$ to a single vertex $s$ in $H$,
and endpoints of edges correspondingly, creates a
$3$-edge equivalent graph.
\end{lemma}

\begin{proof}
A two-edge cut will not separate a $3$-edge connected component.
Therefore all active vertices in $S$ fall on one side of the cut, to which
vertex $s$ may also fall. The proof follows analogously to Lemma~\ref{lem:twocomp}.
\end{proof}

Such components can also be identified in $O(m)$ time
using depth-first search~\cite{Tsin09},
so the preprocessing part of this algorithm is the same
as with the 2-connectivity cases.
However, the graph after this shrinking step is no longer
a tree.
Instead, it is a cactus, which in its simplest terms can be
defined as:
\begin{definition}
\label{def:cactus}
A cactus is an undirected graph where each edge belongs to at
most one cycle.
\end{definition}

On the other hand, cactuses can also be viewed as a tree
with some of the vertices turned into cycles\footnote{Some 
`virtual' edges are needed in this construction,
because a vertex can still belong to multiple cycles.}.
Such a structure essentially allows us to repeat the same operations
as in Section~\ref{sec:Two} after applying the initial contractions.

\begin{lemma}
\label{lem:Remove3EdgeCon}
A connected undirected graph with no nontrivial $3$-edge connected component
is a cactus.
\end{lemma}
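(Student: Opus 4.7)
My plan is to prove the contrapositive: if $G$ is connected and not a cactus, then $G$ has a nontrivial 3-edge connected component. Suppose some edge $e$ lies in two distinct cycles $C_1, C_2$ of $G$. Since any two edges sharing a cycle lie in the same block (maximal subgraph with no cut vertex), both cycles lie inside the unique block $B$ containing $e$. Hence $B$ contains at least two distinct cycles, so $B$ is not a single edge, not a pair of parallel edges (whose only cycle is itself), and not a simple cycle on $\geq 3$ vertices (same reason). Consequently $B$ is 2-vertex-connected with at least three vertices.

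At this point I invoke Whitney's open ear decomposition: $B = C_0 \cup P_1 \cup \cdots \cup P_k$, where $C_0$ is a cycle and each $P_i$ is an open ear, i.e., a path whose two endpoints already lie in $C_0 \cup P_1 \cup \cdots \cup P_{i-1}$ while its interior vertices are new. Since $B$ is not itself a single cycle, $k \geq 1$. I take the first ear $P_1$, with distinct endpoints $a, b \in V(C_0)$. Then the two arcs of $C_0$ running between $a$ and $b$, together with $P_1$ itself, form three internally vertex-disjoint $a$-$b$ paths. These three paths are automatically pairwise edge-disjoint, so $a$ and $b$ admit at least three edge-disjoint paths in $G$; they therefore lie in a common nontrivial 3-edge connected component, contradicting the hypothesis.

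The main delicacy is selecting the right form of ear decomposition. Using \emph{open} (as opposed to closed) ears is precisely what produces a theta subgraph on two pre-existing vertices of $C_0$, which in turn is what yields three edge-disjoint paths between a specific pair. A little extra bookkeeping handles the multigraph features (parallel edges and self-loops) that may arise from the contractions of Lemma~\ref{lem:ThreeComp}; self-loops are irrelevant to connectivity, and between any two contracted 3-edge connected components at most two parallel edges can exist, since three or more would immediately provide three edge-disjoint paths between them and violate the hypothesis just as in the main argument.
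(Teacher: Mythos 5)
Your proof is correct, and it takes a genuinely different route from the paper's. The paper argues directly on the two offending cycles: it reindexes cycle $b$ so that it begins at a vertex $b_1 \notin V(a)$, then exhibits three edge-disjoint paths between the first and last shared vertices of $b$ (the two arcs of $a$ plus the arc of $b$ running through its unshared vertices). You instead localize both cycles into a common block $B$, rule out the degenerate blocks (single edge, parallel pair, bare cycle), and invoke Whitney's open ear decomposition: since $B$ is 2-connected on $\geq 3$ vertices and is not a bare cycle, there is a first ear $P_1$, and $P_1$ together with the two arcs of $C_0$ between its endpoints gives the theta subgraph. Both proofs ultimately hang on the same combinatorial object -- three edge-disjoint paths between a fixed pair of vertices -- but they find it differently. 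Your route trades elementarity for robustness: you lean on the ear-decomposition theorem, but in exchange you avoid a subtle case the paper's indexing argument does not handle, namely $V(a)=V(b)$ with $a \neq b$, where no vertex $b_1 \in V(b)\setminus V(a)$ exists (the result is still true there, but the paper's explicit construction does not apply). Your closing remark on multigraph features is also a necessary patch: a block with two vertices and $\geq 3$ parallel edges is not covered by the ear-decomposition step as you state it, and your observation that three parallel edges already yield three edge-disjoint paths dispatches it cleanly.
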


Due to space restrictions, we save the proof for Appendix~\ref{sec:Appendix}.

With this structural statement, we can then repeat
the reductions from the $2$-edge equivalent
algorithm from Section~\ref{subsec:twoedge}
to produce the $3$-edge equivalent graph.

\begin{lemma}
\label{lem:ThreeEdge}
Given a graph $G$ with $m$ edges and $k$ active vertices $W$,
a $3$-edge equivalent of $G$ of size $O(k)$, $H$, can be
constructed in $O(m)$ time.
\end{lemma}

\begin{proof}
Lemma~\ref{lem:Remove3EdgeCon} means that we can reduce
the graph to a cactus after $O(m)$ time preprocessing.

First consider the tree where the cycles are viewed as vertices.
Note that in this view, a vertex that's not on any cycle is also
viewed as a cycle of size $1$.
This can be pruned in a manner analogous to Lemma~\ref{lem:twoedgetreeprune}:
\begin{enumerate}
\item Cycles containing no active vertices and incident to
$1$ or $2$ other cycles can be contracted to a single vertex.
\item Inactive single-vertex cycles incident to $1$ other cycle can be removed.
\end{enumerate}
This procedure takes $O(m)$ time and produces a graph with
at most $O(k)$ leaves. Correctness of the first rule follows by replacing a cut of the
two edges within an inactive cycle by a cut of the single contracted
vertex with one of its neighbors. The second rule does not affect any cuts separating $W$.
It remains to reduce the length of degree $2$ paths and the sizes
of the cycles themselves.

As in Lemma~\ref{lem:twoedgetreeprune}, 
all vertices of degree
$2$ can be replaced by an edge between its two neighbors.
This bounds the length of degree $2$ paths and
reduces the size of each cycle to at most
twice its number of incidences with other cycles.
This latter number is in turn bounded by the number of
leaves of the tree of cycles.
Hence, this contraction procedure reduces the total size to $O(k)$.
\end{proof}

We remark that this is not identical to iteratively removing inactive
vertices of degrees at most $2$.
With that rule, a cycle can lead to a duplicate edge between pairs of vertices,
and a chain of such cycles needs to be reduced in length.

\bibliographystyle{splncs04}
\bibliography{../ref}

\appendix


\section{Tri-connectivity}
\label{sec:ThreeVertex}

Tri-connectivity queries involving $s$ and $t$ ask
for the existence of a separation pair $\{u, v\}$ whose
removal disconnects $s$ from $t$.
In this section we will extend our techniques to offline tri-connectivity.
To do so, we rely on a tree-like structure for the set of separation
pairs in a bi-connected graph, the SPQR tree \cite{HopcroftT73,DiBattistaT90}.
We review these structures in Section \ref{subsec:spqr} and 
show how to trim them in Section \ref{subsec:spqrtrim}.
As these trees require that the graphs are bi-connected, we extend
this subroutine to our full construction in Section \ref{subsec:trifull}.

\subsection{SPQR Trees}
\label{subsec:spqr}

We now review the definition of SPQR Trees. We will follow the model
given in Chapter 2 of \cite{Weiskircher02}. For a more thorough description of SPQR
Trees, please refer to \cite{Weiskircher02}.

SPQR trees are based on the definition of a split pair,
which generalizes separation pairs by allowing the extra
case of $\{u, v\}$ being an edge of $G$.
A split component of the split pair $\{u, v\}$ is either an edge
connecting them, or a maximal connected subgraph $G'$
of $G$ such that removing $\{u, v\}$ does not disconnect $G'$.

The SPQR tree is defined
recursively on a graph $G$ with a special split pair $\{s, t\}$.
This process can be started by picking an arbitrary edge as the root.
Each node $\mu$ in the tree $\mathcal{T}$ has an associated graph denoted
as its skeleton, $skeleton(\mu)$, and is associated with an edge
in the skeleton of its parent $\nu$, called the virtual edge of $\mu$
in $skeleton(\nu)$. In this way, each virtual edge of a node $\nu$ corresponds
to a child of $\nu$.
For contrast, we will also use real edges to denote edges that are
present in $G$.

\begin{itemize}
\item Trivial Case: if $G$ is a single edge from $s$ to $t$, then $\mathcal{T}$
contains a single Q-node whose skeleton is $G$ itself.
\item Series Case: If the removal of the (virtual) edge $st$ creates cut-vertices,
then these cut vertices partition $G$ into blocks $G_1 \ldots G_k$
and the block-tree has a cycle-like structure.
The root of $\mathcal{T}$ is then an S-node, and $skeleton(\mu)$ is the
cycle containing these cut vertices with virtual edges corresponding
to the blocks.
\item Parallel Case: If the split pair $\{s, t\}$ creates split components
$G_1 \ldots G_k$ with $k \geq 2$, the root of $\mathcal{T}$ is a P-node
and the skeleton contains $k$ parallel virtual edges from $s$ to $t$ corresponding
to the split components.
\item Rigid Case: If none of the above cases apply, then the root of $\mathcal{T}$
is an R-node $\mu$ and $skeleton(\mu)$ is a tri-connected graph where
each edge corresponds to a split pair $\{s_i, t_i\}$, and the corresponding child
contains the union of all split components generated by this pair.
\end{itemize}

Note that by this construction, each edge in the SPQR tree
corresponds to a split-pair.
An additional detail that we omitted is that the construction of R-nodes
picks only the split pairs that are maximal w.r.t. the edge $st$. This detail
is unimportant to our trimming routine, but is discussed in \cite{Weiskircher02}.
Our algorithm acts directly upon the SPQR tree,
and we make use of several important properties of
this tree in our reduction routines.

When viewed as an unrooted tree, the SPQR tree
is unique with all leaves as Q-nodes.
For simplicity, we will refer to this tree with all Q-nodes
removed as the simplified SPQR tree, or $\mathcal{T}_{simple}$.
Also, it suffices to work on the skeletons of nodes,
and the only candidates for  separation pairs that we need to
consider are:
\begin{enumerate}
	\item Two cut vertices in an S-node.
	\item The split pair corresponding to a P-node.
	\item Endpoints of an edge in an R-node.
\end{enumerate}

\subsection{Trimming SPQR trees}
\label{subsec:spqrtrim}

We now show how to convert a bi-connected graph with $k$
active vertices to a $3$-vertex equivalent with $O(k)$
vertices and edges.
Our algorithm makes a sequence of modifications on the SPQR
trees similar to the trimming from Section \ref{sec:Two}.
We will call a vertex $u$ \textit{internal} to a node $\mu$ of the SPQR tree if $u$ is contained
in the split graph of $\mu$, and $u$ is
not part of the split pair associated with $\mu$.
We call a vertex $u$ \textit{exact} to a node $\mu$ if it is internal to $\mu$ but not to any
descendants of $\mu$.
 Exact vertices provide a way to count nodes according to
active vertices without reusing active vertices for multiple nodes.

We first give a way to remove split components with no internal active vertices,
which we will refer to as inactive split components.

\begin{lemma}
\label{lem:hangingcomp2}
Consider an inactive non-Q split component $G'$ produced by the split pair $\{u, v\}$.
We may create a $3$-vertex equivalent graph $H$ by the following replacement rule:
if $u$ and $v$ are $\geq3$-vertex connected in $G$, we may replace $G'$ with the edge $uv$;
otherwise, we may replace $G'$ with a vertex $x$ and edges $ux$, $vx$.
\end{lemma}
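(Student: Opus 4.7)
The plan is to verify both conditions of active-cut equivalence (Definition \ref{def:CutEquivGraph}) for $l=3$. The key structural facts to exploit are that $G'$ attaches to the rest of $G$ only through $\{u,v\}$, the interior $V(G') \setminus \{u,v\}$ is connected (this is part of the definition of a split component), and $G'$ has no internal active vertices by hypothesis.

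The first step is a normalization lemma: any minimal separating cut $(\mathcal{C}, S, T)$ of $V^{active}_G$ in $G$ with $|\mathcal{C}| \leq 2$ can be chosen so that $\mathcal{C}$ is disjoint from the interior of $G'$. The reason is that an interior vertex $w \in \mathcal{C}$ has all of its neighbors in $V(G')$, so $w$'s role in the cut only separates parts of $G'$ from each other or from $\{u,v\}$; since $G'$'s interior has no active vertices, this contributes nothing to the active-vertex partition. Dropping $w$ from $\mathcal{C}$ and absorbing the remaining interior of $G'$ into whichever side contains $u$ (or $v$, if $u$ is itself in $\mathcal{C}$) yields a separating cut of strictly smaller size with the same active partition, contradicting minimality. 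Validity of the smaller cut uses $|\mathcal{C} \setminus \{w\}| \leq 1$ together with connectivity of $G' \setminus \{u,v\}$.

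Next I would handle the two cases of the replacement rule. In the first case, where $u$ and $v$ are 3-vertex connected in $G$, the existence of three internally vertex-disjoint $u$ to $v$ paths forces $u$ and $v$ onto the same side of every separating cut of size $\leq 2$, since two vertex removals can sever at most two of the three paths. Hence the added edge $uv$ in $H$ never crosses any size-$\leq 2$ cut of active vertices, and normalized cuts in $G$ transport to cuts of the same size in $H$ by absorbing $G'$'s interior into the side containing $u$ and $v$; the reverse direction is analogous. In the second case, where $u$ and $v$ are not 3-connected in $G$, the replacement path $u, x, v$ preserves the fact that $u$ and $v$ are 2-connected through an inactive degree-two intermediate. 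Normalized cuts in $G$ transport to $H$ by placing $x$ on whichever side originally contained $G'$'s interior. For the reverse direction, the same normalization argument applied to $H$ shows the inactive degree-two vertex $x$ need not appear in any minimal size-$\leq 2$ active cut, reducing to the case of a cut whose vertices all lie in $V(G)$.

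The main obstacle will be the bookkeeping in the second case, specifically ruling out that $H$ has spurious small active cuts through $x$ of the form $\{u,x\}$ or $\{v,x\}$: the argument there is that such a cut separates only $v$ (respectively $u$) from the component on the other side, and this corresponds to a cut $\{u, *\}$ (respectively $\{v, *\}$) in $G$ of the same size by choosing $*$ to be the corresponding external separator whose existence is guaranteed by $u, v$ being only 2-connected. A secondary subtlety is the validity step inside the normalization lemma, where one must verify that the remaining interior of $G'$ stays in a single connected component of $G \setminus (\mathcal{C} \setminus \{w\})$ so that the smaller cut really is a cut; this follows from the connectivity of $G' \setminus \{u,v\}$ once at most one other vertex is removed.
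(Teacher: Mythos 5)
Your proof has a genuine gap in the normalization lemma, and the gap is exactly at the point the statement is designed to address. You claim that a minimal separating cut $\mathcal{C}$ of size $\leq 2$ can always be pushed off the interior of $G'$, arguing that an interior vertex $w \in \mathcal{C}$ ``only separates parts of $G'$ from each other or from $\{u,v\}$.'' This is false when $w$ is essential for separating $u$ from $v$. Concretely, if $u,v$ are not $3$-connected, there is a size-$2$ cut $\{w,z\}$ with $w$ interior to $G'$ and $z$ interior to some other split component of $\{u,v\}$; this cut puts $u \in S$ and $v \in T$ and therefore genuinely changes the active-vertex partition whenever $u$, $v$, or vertices in the other components are active. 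Dropping $w$ reconnects $u$ to $v$ through $G'$'s (now untouched) interior, so the ``normalized'' cut $\{z\}$ does \emph{not} realize the same active partition and does not contradict minimality of $\{w,z\}$. A four-cycle $u\text{--}w\text{--}v\text{--}z\text{--}u$ with $u,v$ active is a minimal counterexample. The same flaw appears in your reverse-direction claim that $x$ ``need not appear in any minimal size-$\leq 2$ active cut'' of $H$: the cut $\{x,z\}$ in $H$ is precisely the image of $\{w,z\}$ and is both minimal and necessary.

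Because your forward transport only handles normalized cuts, it never accounts for the cuts that separate $u$ from $v$ through $G'$'s interior --- yet representing those cuts is the entire reason the replacement rule introduces $x$ rather than always using the edge $uv$. The paper's proof treats this case head-on: for a cut in $G$ separating $u$ from $v$, it observes that some vertex $w$ of $G'$ must be removed (since $G'$ connects $u$ to $v$) and substitutes $x$ for $w$; conversely, from a cut $\{x,z\}$ in $H$ separating $u$ and $v$, it extracts an internal cut vertex $w$ of $G'$ and uses $\{w,z\}$. You should replace the normalization step with an explicit case analysis on whether the cut separates $u$ from $v$, handling the separating case by the $w \leftrightarrow x$ substitution. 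Your treatment of the $3$-connected case and of the degenerate cuts $\{u,x\}$, $\{v,x\}$ is essentially sound, and your observation about why interior vertices can be ignored is correct in the subcase where $u$ and $v$ land on the same side --- it just cannot be promoted to an unconditional normalization.
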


\begin{proof}
Consider a cut in $G$. If $u$ and $v$ are on the same side of the cut,
an equivalent cut exists in $H$ by placing $u$, $v$, and possibly $x$ on the same side of the cut.
Similarly, a cut in $G$ that removes either $u$ or $v$ can be made in $H$ by removing the
same vertex. If a cut in $G$ separates $u$ and $v$, $u$ and $v$ are not 
$\geq3$-vertex connected in $G$
and a vertex $w$ in $G'$ must have been removed
since $G'$ connects $u$ and $v$. Removing $x$ instead of $w$ creates an equivalent cut in $H$.

In the other direction, if $u$ and $v$ were $\geq3$-vertex connected in $G$, also no cut in $H$
separates $u$ and $v$. If $u$ and $v$ were not $\geq3$-vertex connected in $G$, there exists a
vertex $w$ in $G'$ whose removal separates $u$ and $v$ in $G'$. 
Therefore the cut in $H$ produced by vertices $\{x, z\}$ for some $z \notin G'$
that separates $u$ and $v$ can be made in $G$ by removing $\{w, z\}$ instead.
All other cuts in $H$ can be formed in $G$ by placing $G'$ on the same
side as any remaining vertex in $\{u, v\}$.
\end{proof}

An important consequence of Lemma \ref{lem:hangingcomp2} is that an inactive
split component is a leaf in $\mathcal{T}_{simple}$.
Because of this, it will be easier to think of inactive split components in the same way
we think of Q-nodes. We define the tree $\mathcal{T}_{simple'}$ as the tree $\mathcal{T}_{simple}$
with inactive split component leaves removed. Every leaf $\mu$ in $\mathcal{T}_{simple'}$ must have
an internal active vertex $u$. Furthermore, any vertex internal to a node $\eta$
 is only internal to ancestors and possibly descendants 
of $\eta$. Since $\mu$ is a leaf in $\mathcal{T}_{simple'}$, it has no descendants. It follows that $u$ is
exact for $\mu$ and the tree $\mathcal{T}_{simple'}$ has $O(k)$ leaves.

We will use this to bound the number of leaves in $\mathcal{T}$.
However, if $G$ contains an R-node whose skeleton contains a complete graph between active
vertices, the number of leaves in $\mathcal{T}$ can still be $\Omega(k^2)$.
Therefore, we need another rule to process each skeleton of $\mathcal{T}_{simple'}$ so that we can bound
its size by the number of its exact active vertices and split components.

\begin{lemma}
\label{lem:nodereduction}
There exists a constant $c_0$ such that any node $\mu$ in $\mathcal{T}_{simple'}$
whose skeleton contains a total of $k$ exact active vertices and virtual
edges corresponding to active split components can be replaced by
a node whose skeleton contains $c_0 \cdot k$ vertices/edges to give
a $3$-vertex equivalent graph.

In other words, the number of children of $\mu$ in $T$ is at most $c_0 \cdot k$.
\end{lemma}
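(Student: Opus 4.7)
The plan is to prove the lemma by case analysis on the type of node $\mu$ (S, P, or R). In every case we shrink the skeleton while retaining the exact active vertices and every virtual edge in $E_{\text{active}}$, so that active-cut equivalence (Definition~\ref{def:CutEquivGraph}) is maintained. Throughout, I may assume that Lemma~\ref{lem:hangingcomp2} has already been applied so that every inactive split component is trivial.

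For an S-node, the skeleton is a cycle. The $k$ ``active positions'' (exact active vertices and active virtual edges) split the cycle into at most $k$ arcs of purely inactive material (inactive vertices and edges that are either real or virtual edges to trivial inactive children). Each such arc may be contracted to a single edge in the style of Lemma~\ref{lem:twoedgetreeprune}, since separation pairs of the S-node's sub-graph correspond to pairs of cycle positions and contracting a fully inactive arc does not change how active material is partitioned by any cut. The reduced cycle has $O(k)$ vertices and edges.

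For a P-node, the skeleton has two vertices $s,t$ joined by parallel virtual and real edges. I keep every edge of $E_{\text{active}}$ along with the parent virtual edge, plus at most two inactive parallel edges. The two retained inactive edges suffice to preserve whether $\{s,t\}$ is a separating pair of the sub-graph; any further parallel inactive edges beyond that cannot change the cut structure on active vertices. The reduced skeleton has $O(k)$ edges.

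The hard case is the R-node, whose skeleton is tri-connected and can have $\Omega(n^2)$ edges. Define
\[
V_{\text{int}} = V_{\text{exact}} \cup \{\text{endpoints of edges in } E_{\text{active}}\} \cup \{\text{endpoints of the parent virtual edge}\},
\]
so $|V_{\text{int}}| \leq |V_{\text{exact}}| + 2|E_{\text{active}}| + 2 = O(k)$. I build the replacement skeleton $S'_\mu$ by adjoining a fresh Steiner vertex $c$ to $V_{\text{int}}$, forming a wheel (a Hamiltonian cycle on $V_{\text{int}}$ together with spokes from $c$), and then overlaying the edges of $E_{\text{active}}$ and the parent virtual edge; where an overlaid edge would coincide with a wheel edge I subdivide the wheel edge with a new inactive vertex to avoid multi-edges, and for $|V_{\text{int}}| \leq 2$ I fall back to an ad-hoc constant-sized tri-connected construction such as $K_4$. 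This yields $O(|V_{\text{int}}|) + O(k) = O(k)$ vertices and edges, and the wheel is 3-connected for $|V_{\text{int}}| \geq 3$. For correctness, every separation pair of $\mu$'s original sub-graph that actually separates two active vertices must correspond to a virtual edge in $E_{\text{active}}$: inactive virtual edges attach sub-graphs containing no internal active vertices, and real skeleton edges represent only trivial split components. Hence preserving $E_{\text{active}}$ together with their child sub-graphs preserves every relevant separation pair, and the tri-connectivity of $S'_\mu$ prevents spurious new ones. The main obstacle will be verifying that the newly inserted wheel edges cannot destroy a wanted separation pair $\{u,v\}$ for an active virtual edge $uv$; this follows because removing $\{u,v\}$ already deletes every wheel edge incident to $u$ or $v$, so the only paths from the attached child sub-graph to the remainder of the graph still run through $\{u,v\}$.
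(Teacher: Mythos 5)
The S-node case in your proposal is wrong, and this is the key defect. You claim that each maximal inactive arc of the skeleton cycle can be ``contracted to a single edge'' and that this ``does not change how active material is partitioned by any cut.'' That claim fails. Consider the $6$-cycle $A\!-\!x\!-\!B\!-\!y\!-\!C\!-\!z\!-\!A$ with $A,B,C$ active and $x,y,z$ inactive; its SPQR tree is a single S-node with this cycle as skeleton. The minimal separating cut $\{x,y\}$ has size $2<3$ and places $\{B\}$ on one side and $\{A,C\}$ on the other. After your contraction the skeleton is the triangle $ABC$ with no inactive vertices: no size-$2$ cut in this triangle can put $B$ alone on one side and both $A,C$ on the other without putting an active vertex in the cut set, so Condition~1 of Definition~\ref{def:CutEquivGraph} is violated. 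The underlying error is that you carried over the degree-$2$ elimination of Lemma~\ref{lem:twoedgetreeprune} (designed for a single cut edge/vertex) to a setting where a separation pair may need to pick \emph{two} inactive vertices, one from each of two distinct arcs. The paper's rule avoids this by only deleting the \emph{middle} of three consecutive inactive cycle vertices, so each inactive arc retains at least one (the paper keeps up to two) interior inactive vertex to serve as a cut position; this is both necessary and still gives $O(k)$ size.

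The P- and R-node cases are substantially the same as the paper's. Your R-node construction (wheel on the exact/active-adjacent vertices plus the overlaid active and parent virtual edges) matches the paper's ``tri-connected graph of linear size, e.g.\ a wheel,'' and your remark about subdividing to avoid parallel edges is a reasonable extra detail the paper glosses over. For the P-node you retain two inactive parallel edges where the paper retains three; after Lemma~\ref{lem:hangingcomp2} has already been applied (so that surviving inactive components between $s$ and $t$ are genuinely edges with no interior vertices when there are $\ge 3$ of them), two is in fact enough, and in any case the constant is immaterial for the $c_0 k$ bound. But you should repair the S-node reduction before the overall lemma holds.
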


\begin{proof}
We consider the cases where the node is of type S, P, R separately.

If the node is a P-node, it suffices to consider the case where
there are $3$ or more inactive split components incident
to the cut pair.
Removing at most $2$ vertices from these components
will leave at least one of them intact, and therefore not change
the connectivity between the separation pair, and therefore
the other components.
Therefore, all except $3$ of these components can be discarded.

If the node is an S-node, any virtual edge in the cycle corresponding
to an inactive split component that's not incident to two active
vertices is replaced by actual edges via Lemma \ref{lem:hangingcomp2}.
Furthermore, two consecutive real edges in the cycle connecting
three inactive vertices can be reduced to a single edge by
removing the middle vertex in a manner analogous to
Lemma \ref{lem:twoedgetreeprune}.

Therefore the size of the cycle is proportional to the number of exact active
vertices and active split components.

For an R-node, we can identify all vertices that are either exact and active,
or are incident to virtual edges corresponding to active split
components. 
Then we can simply connect these vertices together using a
tri-connected graph of linear size (e.g. a wheel graph)
and add the active split components back between their respective separation pairs.
Every cut in the new skeleton with a child in $\mathcal{T}_{simple}$ can be produced
by the original graph,
and the only cuts in the original graph that can't be produced
in the new one are cuts isolating an inactive component.
Therefore the two graphs are $3$-vertex equivalent.
\end{proof}

Before we can bound the number of nodes in $\mathcal{T}_{simple'}$, we must eliminate
long paths where a node has only one active child, which in turn has only
one active child, etc. This is only possible if there exists an active vertex (vertices) internal to each split
component, otherwise by Lemma~\ref{lem:hangingcomp2}, a node with no internal active vertices
becomes a leaf. Since the internal active vertex (vertices) are shared amongst all ancestors, there is no
way to associate the active vertex with a constant number of SPQR nodes. The replacement rule
in this Lemma provides a way to get around this.

\begin{lemma}
\label{lem:tritreeprune}
Let $\mu_1$, $\mu_2$, and $\mu_3$ be a parent-child sequence of SPQR nodes where $\mu_2$ and
$\mu_3$ are associated with 
 split pairs $\{u, v\}$ and $\{x, y\}$ respectively,
 $\mu_3$ is the single active child of $\mu_2$,
 $\mu_2$ is the single active child of $\mu_1$, and either none of $u$, $v$, $x$, and $y$ are active
 or $u = x$ and is active.
 We may replace $\mu_2$ with $\mu_3$, effectively replacing vertex $u$ with $x$ and $v$ with $y$.
\end{lemma}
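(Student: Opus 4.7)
The plan is to prove active-cut equivalence by a direct analysis of minimal separating cuts. First I would describe the replacement at the graph level: delete every vertex internal to $\mu_2$'s skeleton that does not also lie in $\mu_3$'s subtree (call this deleted region $D$), delete every inactive split component attached to one of $\mu_2$'s virtual edges other than the one leading to $\mu_3$, and then identify $u$ with $x$ and $v$ with $y$. After the identification, $\mu_3$ becomes a direct child of $\mu_1$ with boundary pair $\{x,y\}$ replacing $\{u,v\}$ in $\mu_1$'s skeleton.

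The main combinatorial observation is that $D$ together with the discarded inactive split components contains no active vertex. This follows from three facts: $\mu_3$ is the only active child of $\mu_2$, so none of the discarded children has an internal active vertex; $\mu_2$ is the only active child of $\mu_1$, so any vertex exact to $\mu_2$'s skeleton outside $\{u,v,x,y\}$ is inactive; and by hypothesis the only possibly active vertex in $\{u,v,x,y\}$ is $u=x$, which is retained. Because minimal separation pairs in a bi-connected graph correspond to tree edges of the SPQR tree, I can then argue cut-by-cut: separations represented by tree edges outside $\mu_2$'s subtree or strictly inside $\mu_3$'s subtree transfer verbatim; the two pairs $\{u,v\}$ and $\{x,y\}$, formerly represented by the disappearing tree edges from $\mu_1$ to $\mu_2$ and from $\mu_2$ to $\mu_3$, both separate the active vertices of $\mu_3$'s subtree from everything else (since $D$ is active-free they induce identical partitions of active vertices), and this common partition is represented in $H$ by the new tree edge from $\mu_1$ to $\mu_3$ carrying the pair $\{x,y\}$; and any remaining tree edges whose pairs vanish lie between $\mu_2$ and its inactive children or internal to $\mu_2$'s skeleton, and such pairs separate only inactive vertices, inducing trivial active partitions that need not be represented.

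The main obstacle will be the case where $u=x$ is active and we identify the inactive vertices $v$ and $y$. Here we must verify that no spurious new cut of size one or two appears in $H$---in particular that $u$ does not become a cut-vertex of $H$ that it was not in $G$---and that every minimal separation of active vertices in $G$ still has an equivalent witness in $H$. Both directions reduce to the fact that $D$ is attached to the rest of the graph only through $\{u,v\}$ and $\{x,y\}$, so any cut in $H$ using $x$, $y$, or both lifts to a cut in $G$ using the corresponding $u$ or $v$, and any cut in $G$ disjoint from $D$ descends unchanged to $H$ under the identification.
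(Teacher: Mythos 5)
Your overall strategy is the same as the paper's (terse) argument: observe that the region $D$ collapsed away between the split pairs $\{u,v\}$ and $\{x,y\}$ carries no active vertices, so the two pairs induce identical active partitions and only one of them need survive; every other cut in $G$ lost in $H$ separates inactive material only; and every cut of $H$ lifts back to a cut of $G$. The paper states this in four sentences without the explicit description of $D$ or the cut-by-cut SPQR accounting; your proposal makes both explicit, and the extra care with the case $u=x$ active is welcome.

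However, the second of your ``three facts'' does not follow from the stated hypotheses. ``$\mu_2$ is the only active child of $\mu_1$'' constrains the \emph{siblings} of $\mu_2$ under $\mu_1$ (they contain no internal active vertex); it says nothing about vertices \emph{exact to} $\mu_2$ --- i.e., vertices of $\text{skeleton}(\mu_2)$ outside $\{u,v\}$, which are internal to $\mu_2$ and hence contribute to $\mu_2$ being an active child. A vertex $a \in \text{skeleton}(\mu_2)$ not lying in $\{u,v,x,y\}$ could be active while $\mu_3$ remains the unique active child of $\mu_2$ and $\mu_2$ the unique active child of $\mu_1$; such an $a$ would be destroyed by the replacement and the lemma would fail. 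The paper's own proof elides this point as well (``only separate inactive split components'' speaks only about children, not exact vertices), so what the lemma really needs --- and what Lemma~\ref{lem:manyevent} implicitly supplies when invoking it --- is the stronger hypothesis that $\mu_2$ has no exact active vertex other than possibly $u=x$. Your proof would be correct with that hypothesis added; as written, the justification for the central claim that $D$ is active-free is a non sequitur. Also, a minor imprecision: not every separation pair corresponds to an SPQR tree edge (non-adjacent cut vertices of an S-node and real edges of an R-node are separation pairs without associated tree edges), so the ``cut-by-cut over tree edges'' framing needs a sentence acknowledging that these extra pairs separate only pieces internal to a single skeleton and hence, under the same inactivity assumption, never separate active vertices differently.
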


\begin{proof}
In $G$, cuts induced by the removal of $\{u, v\}$ and $\{x, y\}$ both give the same partition of active vertices.
Therefore, we only need one in $H$, which is given by the cut $\{x, y\}$. All other cuts in $G$ not present
in $H$ only separate inactive split components, which need not be represented in $H$.

In the other direction, every cut in $H$ is still present in $G$.
Thus the new graph $H$ preserves $3$-vertex equivalence.
\end{proof}

We can now bound the number of nodes in $\mathcal{T}_{simple'}$ after no reductions
via Lemmas~\ref{lem:hangingcomp2},~\ref{lem:nodereduction}, and \ref{lem:tritreeprune} are possible.

\begin{lemma}
\label{lem:manyevent}
Consider a graph $G$ where no reductions via
Lemmas~\ref{lem:hangingcomp2}, \ref{lem:nodereduction}, and \ref{lem:tritreeprune}
are possible. There is a constant $c_1$ such that the number of nodes in $G$'s
SPQR tree $\mathcal{T}$ is no more than  $c_1 \cdot k$, where $k$ is the total number of
active vertices in $G$.
\end{lemma}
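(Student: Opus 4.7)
The plan is to first bound $|\mathcal{T}_{simple'}| = O(k)$, then pass from $\mathcal{T}_{simple'}$ to the full SPQR tree $\mathcal{T}$ using Lemma~\ref{lem:nodereduction}.

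The number of leaves of $\mathcal{T}_{simple'}$ is at most $k$: if a leaf $\mu$ had no internal active vertex, its split component would be inactive and Lemma~\ref{lem:hangingcomp2} would still apply, contradicting irreducibility; hence every leaf carries at least one exact active vertex, and exact active vertices at distinct nodes form disjoint subsets of $V_G^{active}$. A standard tree-counting argument then bounds the internal nodes of degree $\geq 3$ by the number of leaves, so $\mathcal{T}_{simple'}$ has $O(k)$ branching nodes as well.

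It remains to bound the degree-$2$ internal nodes of $\mathcal{T}_{simple'}$, which form chains between branching nodes and leaves. For any three consecutive chain nodes $\mu_1,\mu_2,\mu_3$, each of $\mu_2,\mu_3$ is the unique active child of its parent along the chain, so the first two preconditions of Lemma~\ref{lem:tritreeprune} are satisfied. Irreducibility then forces the split-pair condition to fail: the split pairs $\{u,v\}$ of $\mu_2$ and $\{x,y\}$ of $\mu_3$ must contain at least one active vertex but cannot share an active vertex in the specified way. Hence the active content of consecutive split pairs must change at every step along the chain. I would charge each chain position to an active vertex appearing in or near its split pair, and argue using the non-sharing property that each active vertex is charged only $O(1)$ times, giving $O(k)$ chain nodes in total and therefore $|\mathcal{T}_{simple'}| = O(k)$.

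Finally, by Lemma~\ref{lem:nodereduction}, each non-Q node's skeleton has size linear in its exact active vertices plus its active children; summing over $\mathcal{T}_{simple'}$ and using that the total number of exact active vertices is at most $k$ and the total number of active children equals $|\mathcal{T}_{simple'}|-1$ gives an aggregate skeleton size of $O(k)$. Since every non-root node of $\mathcal{T}$ — Q-node or non-Q inactive leaf alike — corresponds to a virtual edge in some non-Q ancestor's skeleton, $|\mathcal{T}| = O(k)$ follows. The main obstacle is the charging argument in the degree-$2$ chain bound: the subtle point is showing that no single active vertex can sustain a long chain by alternately appearing in and dropping out of the split pair, which requires carefully exploiting the non-sharing property delivered by Lemma~\ref{lem:tritreeprune}.
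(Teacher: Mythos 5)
Your overall structure mirrors the paper's: bound the leaves of $\mathcal{T}_{simple'}$ via Lemma~\ref{lem:hangingcomp2}, bound degree-$\geq 3$ nodes by the leaves, handle degree-$2$ chains via Lemma~\ref{lem:tritreeprune}, and pass from $\mathcal{T}_{simple'}$ to $\mathcal{T}$ using Lemma~\ref{lem:nodereduction}. You also correctly identify the crux: bounding the chains of degree-$2$ nodes. However, you explicitly leave that step incomplete, and the missing piece is precisely the idea the paper supplies, so this is a genuine gap rather than merely an alternate presentation.

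The specific idea you are missing is the role of \emph{exact} active vertices as a disjoint charging resource. The paper's argument for a degree-$2$ node $\mu$ with parent $\eta$ splits into two cases. If $\mu$'s split pair contains an active vertex $u$ that is internal to $\eta$ (equivalently, $u$ is not in $\eta$'s split pair), then $u$ is exact for $\eta$ and we charge $\eta$ to $u$; since exactness identifies each active vertex with at most one node, this charges at most $k$ chain nodes in total. In the remaining case, any active vertex in $\mu$'s split pair must also lie in $\eta$'s split pair. Chaining this observation over consecutive degree-$2$ nodes forces the active portion of the split pairs to stabilize within a constant number of steps, at which point the hypotheses of Lemma~\ref{lem:tritreeprune} are satisfied and the chain would have been shortened, contradicting irreducibility. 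Your phrasing ``charge each chain position to an active vertex appearing in or near its split pair'' gestures at this but does not pin down which vertex is charged or why the charging is injective; in particular, the scenario you flag --- a single active vertex alternately entering and leaving split pairs along a long chain --- is exactly what the exactness argument rules out (a vertex that leaves the parent's split pair while remaining in the child's becomes exact for the parent and is charged once, never to be reused), and that is what you would need to make the proof go through.
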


\begin{proof}

As explained earlier, the tree $\mathcal{T}_{simple'}$ has $O(k)$ leaves.
To bound the total number of nodes, we must consider the length of a path of degree $2$ nodes
in $\mathcal{T}_{simple'}$. If a node $\mu$ on this path has 
a split pair $\{u, v\}$ with an active vertex, say $u$, internal to its parent $\eta$ (implying $\eta$ does
not have $u$ in its split pair), then $u$ is exact for $\eta$ and we may associate $u$ with $\eta$.
Otherwise, consider two adjacent degree $2$ nodes to which this does not apply.
The parent may have an active vertex in its split pair not shared with its child, however then its child
cannot have an active vertex in its split pair or else it fits the above situation
(the active vertex in the child's split pair is internal to the parent).
Therefore this can only happen once
until the parent and child have at most one active vertex shared in their split pairs, and the procedure
of Lemma \ref{lem:tritreeprune} may be used to replace the parent node with the child.
It follows that we may associate each active vertex along this
path to a constant number of SPQR nodes.

Since the number of nodes of degree $\geq3$ in a tree is bounded by the number of leaves, the above shows
$\mathcal{T}_{simple'}$ has $O(k)$ nodes. We now consider the full tree $\mathcal{T}$. 
This tree adds inactive split components with a constant number of Q-node leaves as well as
Q-node leaves themselves to active split components.
By Lemma \ref{lem:nodereduction},
for any node $\mu$ in $\mathcal{T}_{simple'}$, we add at most $c_0 \cdot k$ extra children to $\mu$
in $\mathcal{T}$, where $k$ is the sum of active vertices exact to $\mu$ and the number of active
split components, which are children of $\mu$ in $\mathcal{T}_{simple'}$. The sum of the degrees of all
the vertices in $\mathcal{T}_{simple'}$ is $O(k)$, therefore the number of extra nodes in $\mathcal{T}$
is also $O(k)$. Therefore $\mathcal{T}$ has $O(k)$ nodes.

\end{proof}

We now consider applying Lemmas \ref{lem:hangingcomp2}, \ref{lem:nodereduction},
and \ref{lem:tritreeprune} on $G$ to produce a $3$-vertex equivalent graph $H$ in linear time.

\begin{lemma}
Given a bi-connected graph $G$ and $k$ active vertices $W$,
we may find the SPQR tree associated with $G$ and apply
the reduction rules given by 
Lemmas~\ref{lem:hangingcomp2},~\ref{lem:nodereduction}, and \ref{lem:tritreeprune}
until exhaustion in linear time.
From this can be constructed a graph $H$ $3$-vertex equivalent to $G$
with $O(k)$ vertices and edges.
\end{lemma}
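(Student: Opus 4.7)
The plan is to first construct the SPQR tree $\mathcal{T}$ of $G$ in $O(m)$ time using a known linear-time algorithm (for instance that of Gutwenger and Mutzel), then apply the three reduction rules in a carefully ordered bottom-up pass and read off $H$ from the resulting tree. The size bound on $H$ will follow from Lemmas \ref{lem:manyevent} and \ref{lem:nodereduction}, while the running-time argument amounts to showing that each node of $\mathcal{T}$ is touched only a constant number of times.

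First I would perform a single post-order traversal of $\mathcal{T}$ to compute, for each node $\mu$, whether it is active, the list of exact active vertices it owns, and the number of its children that are active. Using this information, every inactive non-Q subtree can be detected and processed in one sweep by Lemma \ref{lem:hangingcomp2}, replacing the subtree either by a single edge $uv$ or by a two-edge path through a fresh vertex $x$; deciding which case applies reduces to checking whether the subtree is a single P-node with multiple parallel components, which is a local test on $\mathcal{T}$.

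Next, for each node of the remaining tree $\mathcal{T}_{simple'}$ I would apply Lemma \ref{lem:nodereduction} to its skeleton. For an S-node this is one walk around the cycle contracting inactive degree-two vertices; for a P-node it is a scan that discards all but three inactive parallel virtual edges; for an R-node I would collect the important vertices (exact active ones together with endpoints of active virtual edges), discard the rest of the rigid graph, rebuild a wheel through the important vertices, and reattach the active virtual edges between their separation pairs. Because $\sum_\mu |skeleton(\mu)| = O(m)$, the whole pass runs in $O(m)$ time. Finally, a last traversal collapses maximal chains eligible for Lemma \ref{lem:tritreeprune} using parent/child pointers in $\mathcal{T}_{simple'}$, each collapse being $O(1)$ and each node visited a constant number of times.

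The main obstacle I anticipate is controlling the interaction between the three rules: a collapse via Lemma \ref{lem:tritreeprune} can expose a new inactive subtree that Lemma \ref{lem:hangingcomp2} must fire on, and shrinking a skeleton via Lemma \ref{lem:nodereduction} can turn its parent into a degree-two node eligible for Lemma \ref{lem:tritreeprune}. The key to keeping this linear is monotonicity: every rule strictly decreases the number of SPQR nodes or the total skeleton size, no rule re-introduces a deleted node, and bottom-up processing lets each re-examination of a node be charged against the reduction that triggered it, so the total work stays $O(m)$. Once no reduction fires, Lemma \ref{lem:manyevent} gives $|\mathcal{T}| = O(k)$ and Lemma \ref{lem:nodereduction} bounds each skeleton in terms of its exact active vertices and active children, which together imply $|V(H)| + |E(H)| = O(k)$ as required.
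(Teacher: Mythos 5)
Your proposal is correct and takes essentially the same approach as the paper: build the SPQR tree in linear time, apply each of the three reduction lemmas via a traversal of the tree, and invoke Lemma~\ref{lem:manyevent} together with Lemma~\ref{lem:nodereduction} for the $O(k)$ size bound. The one place where you go beyond the paper is in explicitly worrying about cascading interactions between the rules and charging re-examinations to the reductions that trigger them; the paper simply asserts that one pass per rule suffices, so your monotonicity/amortization argument is a welcome bit of extra rigor but not a different route.
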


\begin{proof}
The SPQR tree can be found in linear time \cite{BienstockM88}.
Lemma \ref{lem:manyevent} shows that continued application of
Lemmas~\ref{lem:hangingcomp2}, \ref{lem:nodereduction}, and \ref{lem:tritreeprune}
produces an SPQR tree $\mathcal{T}$ with $O(k)$ nodes. As each leaf
of $\mathcal{T}$ is an edge of the graph it represents, this shows the resulting graph
represented has $O(k)$ vertices and edges.

The rules given by Lemmas~\ref{lem:hangingcomp2}, \ref{lem:nodereduction}, and \ref{lem:tritreeprune}
can be applied until exhaustion in $O(m)$ time. First, we may apply Lemma \ref{lem:hangingcomp2}
by traversing each split component of $\mathcal{T}$ and applying the lemma when possible. Next,
we may apply Lemma \ref{lem:nodereduction} by a similar traversal. Finally, the rule of Lemma
\ref{lem:tritreeprune} can be applied by keeping a stack of the SPQR nodes of a depth-first traversal
of the SPQR tree $\mathcal{T}$. All three rules require a single traversal of $\mathcal{T}$ each and thus
can be done in $O(m)$ total time.
\end{proof}

\subsection{Constructing the Full Equivalent}
\label{subsec:trifull}
We now extend this trimming routine for bi-connected graphs
to arbitrary graphs.
By an argument similar to that at the start of Section \ref{sec:Two},
we may assume that $G$ is connected.
Therefore we need to work on the block-tree in a manner
similar to Section \ref{subsec:twovert}.
We first show that we may invoke the trimming routine from
Section \ref{subsec:spqrtrim} on each bi-connected component.

\begin{lemma}
\label{lem:bicompreplace}
Let $B$ be a bi-connected component in $G$ and $B'$ be a $3$-vertex equivalent
graph for $B$ where the active vertices consist of
all cut vertices and active vertices in $B$.
Replacing $B$ with $B'$ gives $H$ that's $3$-vertex
equivalent to $G$.
\end{lemma}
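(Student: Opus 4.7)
The plan is to verify Definition \ref{def:CutEquivGraph} for $(G, H)$ directly, using the block tree of $G$ (and the identical block tree of $H$) to classify minimal separating cuts into those living inside $B$ and those whose cut vertices are articulation points of $G$. For a minimal separating cut $(\mathcal{C}, S, T)$ of $V_G^{active}$ with $|\mathcal{C}| \le 2$, if the separated active vertices both lie in $B$, then the bi-connectivity of $B$ forces $\mathcal{C} \subseteq B$: any two internally vertex-disjoint paths between the separated active vertices in $B$ must each contribute an internal vertex to $\mathcal{C}$, and these internal vertices lie in $B$. If instead the separated active vertices span at least two blocks, minimality forces each vertex of $\mathcal{C}$ to be an articulation point of $G$: any other vertex in $\mathcal{C}$ could be discarded without changing the partition of $V_G^{active}$, contradicting minimality of $|\mathcal{C}|$.

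In the first case, I would restrict $(\mathcal{C}, S, T)$ to $B$ to obtain a minimal separating cut of the set $V_B^{active}$ consisting of active vertices of $G$ lying in $B$ together with every articulation point of $G$ lying in $B$ --- exactly the active set used in building $B'$. Active-cut equivalence of $B$ and $B'$ yields a corresponding cut of the same size in $B'$; I then extend this to a cut in $H$ by assigning each other block of $G$ to the side of the $B'$-cut on which its connecting articulation point sits (blocks whose connecting articulation point landed in $\mathcal{C}$ may be placed on either side, since their only attachment to the rest of the graph is through $\mathcal{C}$). Because the surjection from $V_G^{active}$ to $V_H^{active}$ fixes the images of articulation points of $B$, the induced partition on active vertices agrees on both sides. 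In the second case, every vertex of $\mathcal{C}$ is an articulation point of $G$; articulation points inside $B$ are active in $B'$ and hence survive into $H$, while articulation points outside $B$ are untouched, so the same $\mathcal{C}$ and the same block-tree partition give an equivalent cut in $H$.

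For the reverse direction I would run the same case analysis on $H$, using the other direction of the active-cut equivalence of $B$ and $B'$ to lift cuts inside $B'$ back to cuts inside $B$, and then extending them to $G$ by the identical block-tree attachment argument. The main obstacle is the block-tree bookkeeping: I need to confirm that the attachment prescription yields a well-defined partition with no crossing edges and matching cut sizes on the two sides of the correspondence. This reduces to the single observation that every edge between $B$ (respectively $B'$) and the rest of $G$ (respectively $H$) passes through an articulation point of $G$ lying in $B$, and such vertices are active in $B'$ and so respected by the $(B, B')$ active-cut equivalence; this guarantees that a cut which behaves correctly inside $B$ (or $B'$) automatically behaves correctly on the attached blocks.
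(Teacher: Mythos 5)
Your overall plan---cut the analysis by where the cut lives, lift cuts inside $B$ through the $(B,B')$ equivalence, and stitch the rest of the block tree back on via attaching articulation points---is the right shape, and your first case and the block-tree bookkeeping remark at the end are correct. However, the dichotomy you actually state is wrong, and the second case as argued fails.

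You split on whether the separated $G$-active vertices lie in $B$. But a minimal separating cut of size two can have $\mathcal{C}$ entirely inside $B$, consist of \emph{non}-articulation vertices, and yet separate only active vertices that all sit outside $B$. Concretely, let $B$ be a six-cycle $c_1 c_2 u c_3 c_4 v$, attach a block at each $c_i$ containing a single active vertex $a_i$, let none of the $c_i$ be active, and put no active vertex in $B$. Removing $\{u,v\}$ separates $\{a_1,a_2\}$ from $\{a_3,a_4\}$; no single vertex gives this partition, so the cut is minimal; yet $u$ and $v$ are interior to $B$ and are not articulation points of $G$. Your case 1 does not apply (the separated $G$-active vertices all lie outside $B$), and your case 2 claim that ``minimality forces each vertex of $\mathcal{C}$ to be an articulation point'' is simply false here: removing $u$ or $v$ alone from $\mathcal{C}$ does change the partition, so minimality gives you nothing. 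Your case-2 handling then breaks outright, because $u$ and $v$ need not exist in $H$ at all after $B$ is replaced by $B'$, so ``the same $\mathcal{C}$'' is not available.

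The repair is to key the case split on the cut itself rather than on the location of the separated $G$-active vertices: either (i) split on $|\mathcal{C}\cap B|$, observing that if fewer than two vertices of $\mathcal{C}$ lie in $B$ then bi-connectivity places all of $B$ on one side and the cut carries over verbatim, while if two lie in $B$ you invoke the $(B,B')$ equivalence; or (ii) split on whether the cut separates two $B'$-active vertices in $B$, remembering that $B'$-active includes the articulation points of $G$ in $B$ (which your first case's lifting already secretly relies on). Either version routes the example above into the case that uses the $(B,B')$ equivalence, which is where it must go. The paper takes route (i).
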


\begin{proof}
By symmetry of the setup it suffices to show that any
cut in $G$ has an equivalent cut in $H$.
Since $B$ is two-connected, any cut in $B$ that does not
remove at least $2$ vertices in $B$ results in all vertices of $B$
on the same side of the cut.
This means any vertex in $B$ that's not a cut-vertex in $G$
can be added without changing the cut.
This also means that removing these vertices has the
same effect in $G$ and $H$.

Therefore it suffices to consider cuts that remove $2$
vertices in $B$.
By assumption, there exists a cut in $H$ that results
in the same set of active vertices in $C$, and cut vertices
incident to $B$ on one side.
The structure of the block tree gives that the rest
of the graph is connected to $B$ via one of the cut
vertices, and the side that this cut vertex is on
dictates the side of the cut that part is on.
Therefore, the rest of the graph, and therefore the active vertices
not in $B$ will be partitioned the same way by this cut.
\end{proof}

It remains to reduce the block tree, which we do in a way
analogous to Section~\ref{subsec:twovert}.
However, the proofs need to be modified for separation
pairs instead of a single cut edge/vertex.

\begin{lemma}
\label{lem:hangingcomp}
A bi-connected component containing no active vertices
and incident to only one cut vertex can be removed (while
keeping the cut-vertex) to create a $3$-vertex equivalent graph.
\end{lemma}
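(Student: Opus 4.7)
The plan is to verify both conditions of Definition~\ref{def:CutEquivGraph} directly, exploiting the structural fact that, since $w$ is the only cut vertex incident to $B$, the vertices of $B \setminus \{w\}$ form a ``pendant'' piece attached to the rest of $G$ solely through $w$. In particular, any simple path between two vertices outside $B \setminus \{w\}$ can avoid $B \setminus \{w\}$ entirely, since a simple path cannot pass through $w$ twice. Consequently, the connectivity (and higher connectivity) among vertices outside $B \setminus \{w\}$---which includes every active vertex, by assumption---is determined entirely by $H$.

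For condition~1, I would take a minimal separating cut $(\mathcal{C}_G, S_G, T_G)$ of $V_G^{active}$ in $G$ with $|\mathcal{C}_G| < 3$ and argue that $\mathcal{C}_G \cap (B \setminus \{w\}) = \emptyset$. Indeed, if $v \in \mathcal{C}_G \cap (B \setminus \{w\})$, removing $v$ from $\mathcal{C}_G$ still yields a separating set with the same active-vertex partition, because the presence or absence of $v$ cannot affect reachability between vertices outside $B \setminus \{w\}$ (by the observation above); this contradicts minimality. Hence $\mathcal{C}_G \subseteq V_H$, and restricting $(S_G, T_G)$ to $V_H$ gives a separating cut in $H$ of the same size, inducing the same partition of $V_H^{active} = V_G^{active}$. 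Minimality in $H$ then follows from the next paragraph, since any strictly smaller cut in $H$ for this partition could be lifted back to $G$.

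For condition~2 (and for the minimality check above), I would establish a lifting lemma: any cut $(\mathcal{C}_H, S_H, T_H)$ in $H$ separating $V_H^{active}$ extends to a cut in $G$ of the same size with the same partition of active vertices. The extension is: if $w \notin \mathcal{C}_H$, then $w$ lies in $S_H$ or $T_H$, and I place all of $B \setminus \{w\}$ on that same side, which is consistent because $B \setminus \{w\}$ is connected to the rest of the graph only through $w$; if $w \in \mathcal{C}_H$, then I place $B \setminus \{w\}$ arbitrarily, since no edges leave $B \setminus \{w\}$ except through $w$. Applying this lifting to any hypothetical cut in $H$ of size $< 3$ equivalent to a cut $(\mathcal{C}_G, S_G, T_G)$ in $G$ of minimal size $\geq 3$ would contradict the minimality of the latter, establishing condition~2.

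The main obstacle is bookkeeping around the case $w \in \mathcal{C}$: I need to be careful that the extended cut in $G$ really is a valid cut (no edges crossing between $S_G$ and $T_G$), and that the lifting correctly handles the fact that $B \setminus \{w\}$ may be arbitrarily split when $w$ is cut, without affecting the active-vertex partition. Everything else is a direct application of the ``pendant'' structural observation combined with the minimality definitions.
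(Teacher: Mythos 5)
Your proposal is correct and takes essentially the same approach as the paper's proof: map cuts both ways across the deletion of the pendant piece $B\setminus\{w\}$, using the fact that $B\setminus\{w\}$ meets the rest of $G$ only through $w$ and contains no active vertices. You are somewhat more careful than the paper about why the cut \emph{size} is preserved (your observation that a minimal separating cut of size $<3$ cannot contain a vertex of $B\setminus\{w\}$), and the residual bookkeeping you flag around the case $w\in\mathcal{C}$ does go through, since every edge leaving $B\setminus\{w\}$ ends at $w$.
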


\begin{proof}
Any cut in $G$ can be mapped over to $H$ by removing
the same set of vertices (minus the ones in this component).
Given a cut in $H$, removing the same set of vertices
in $G$ results in a cut with this component added to
one side.
As there are no active vertices in this cut, the cut
separates the active vertices in the same manner.
\end{proof}

Long paths of bi-connected components can be handled
in a way similar to Lemma \ref{lem:bitreeprune}.

\begin{lemma}
\label{lem:deg2comp}
Two bi-connected components $C_1$ and $C_2$
with no active vertices that share cut vertex $w$
and are only incident to one other cut vertex each,
$u$ and $v$ respectively, can be replaced by an
edge connecting $u$ and $v$, preserving $3$-vertex equivalence.
\end{lemma}

\begin{proof}
As there is a path from $u$ to $v$ through these
components and $w$, a minimum cut that separates $W$ in $G$ either removes
one of $u$, $v$, or $w$, or has $uv$ on the same side.
In the first case, removing $u$ or $v$ leads to the
same cut as none of them are active, while in the
second case the edge $uv$ does not cross the cut.
Since $uv$ is connected by an edge in $H$, in a cut in $H$
they're either on the same side, or one of $u$ or $v$ is removed.
In either case, an equivalent cut in $G$ can be formed
by assigning what remains of $C_1$ and $C_2$
to the same side as any remaining vertex.
\end{proof}

Our final construction can be obtained by applying these routines
to the block tree first, then Lemma~\ref{lem:bicompreplace}.

\begin{lemma}
Given a graph $G$ with $m$ edges and $k$ active vertices $W$,
a $3$-vertex equivalent of $G$ of size $O(k)$, $H$, can be constructed
in $O(m)$ time.
\end{lemma}

\begin{proof}
By a proof similar to Lemma \ref{lem:biequivalent}, applying
Lemmas \ref{lem:hangingcomp} and \ref{lem:deg2comp}
repeatedly leads to a $3$-vertex equivalent graph $H$ whose block-tree
contains $O(k)$ components and cut vertices.

The total number of cut vertices and active vertices
summed over all bi-connected components is $O(k)$.
Therefore, applying Lemma~\ref{lem:bicompreplace} on
each component gives the size bound.
\end{proof}


\section{Omitted Proofs}
\label{sec:Appendix}

\begin{proof}[Proof of Lemma \ref{lem:Remove3EdgeCon}]
We prove by contradiction. Let $G$ be a graph with no nontrivial $3$-edge
connected component.
Suppose there exists two simple cycles
$a$ and $b$ in $G$ with more than one vertex, and thus at least one edge, in common.

Call the vertices in the first simple cycle $a_1, \ldots, a_n$ and the
second simple cycle $b_1, \ldots, b_m$, in order along the cycle.

Since these cycles are not the same, there must be some vertex not common to both cycles.
Without loss of generality, assume
(by flipping $a$ and $b$) that $b$ is not a subset of $a$,
and (by shifting $b$ cyclically)
that $b_1$ is only in $b$ and not $a$.

Now let $b_{first}$ be the first vertex after $b_1$
in $b$ that is common to both cycles, so
\begin{align}
first \defeq \min_{i} b_i \in a.
\end{align}
and let $b_{last}$ be the last vertex in $b$
common to both cycles
\begin{align}
last \defeq \max_{i} b_i \in a.
\end{align}
The assumption that these two cycles have more than
$1$ vertex in common means that
\begin{align}
first < last.
\end{align}

We claim $b_{first}$ and $b_{last}$ are 3-edge connected.

We show this by constructing three edge-disjoint paths connecting $b_{first}$ and $b_{last}$.
Since both $b_{first}$ and $b_{last}$ occur in $a$,
we may take the two paths formed by cycle $a$
connecting $b_{first}$ and $b_{last}$, which are clearly edge-disjoint.

By construction, vertices
\begin{align}
b_{last+1}, \ldots, b_m, b_1, \ldots, b_{first-1}
\end{align}
are not shared with $a$.
Thus they form a third edge-disjoint path connecting
$b_{first}$ and $b_{last}$, and so the claim follows.
Therefore, a graph with no $3$-edge connected vertices,
and thus no nontrivial $3$-edge connected component
has the property that two simple cycles have at most
one vertex in common.
\end{proof}

\end{document}